\newtheorem{theorem}{Theorem}[section]
\newtheorem{lemma}[theorem]{Lemma}
\newtheorem{corollary}[theorem]{Corollary}
\newtheorem{proposition}[theorem]{Proposition}
\newtheorem{remark}{Remark}
\newtheorem{conjecture}{Conjecture}
\newtheorem{definition}[theorem]{Definition}
\newtheorem{example}[theorem]{Example}
\newcommand{\T}{{\mathcal T}}
\newcommand{\C}{{\mathcal C}}
\newcommand{\NN}{{\mathbb N}}
\newcommand{\sS}{{\mathcal S}}
\newtheorem{problem}{Problem}
\title{Perfect taxon sampling and fixing taxon traceability: Introducing a class of phylogenetically decisive collections of taxon sets}
\author[$\ast$,1]{Mareike Fischer}
\author[2,3]{Janne Pott}
\affil[1]{Institute of Mathematics and Computer Science, University of Greifswald, Greifswald, Germany, mareike.fischer@uni-greifswald.de and email@mareikefischer.de}
\affil[2]{Institute for Medical Informatics, Statistics and Epidemiology, Leipzig University, Leipzig, Germany}
\affil[3]{MRC Biostatistics Unit, University of Cambridge, Cambridge, UK}
\date{}
\begin{document}
\maketitle

\begin{abstract}
Phylogenetically decisive collections of taxon sets have the property that if trees are chosen for each of their elements, as long as these trees are compatible, the resulting supertree is unique. This means that as long as the trees describing the phylogenetic relationships of the (input) species sets are compatible, they can only be combined into a common supertree in precisely one way. This setting is sometimes also referred to as \enquote{perfect taxon sampling}. While for rooted trees, the decision if a given set of input taxon sets is phylogenetically decisive can be made in polynomial time, the decision problem to determine whether a collection of taxon sets is phylogenetically decisive concerning \emph{unrooted} trees is unfortunately coNP-complete and therefore in practice hard to solve for large instances. This shows that recognizing such sets is often difficult. In this paper, we explain phylogenetic decisiveness and introduce a class of input taxon sets, namely so-called \emph{fixing taxon traceable} sets, which are guaranteed to be phylogenetically decisive and which can be recognized in polynomial time. Using both combinatorial approaches as well as simulations, we compare properties of fixing taxon traceability and phylogenetic decisiveness, e.g., by deriving lower and upper bounds for the number of quadruple sets (i.e., sets of 4-tuples) needed in the input set for each of these properties. In particular, we correct an erroneous lower bound concerning phylogenetic decisiveness from the literature. 

We have implemented the algorithm to determine if a given collection of taxon sets is fixing taxon traceable in \textsf{R} and made our software package \verb+FixingTaxonTraceR+ publicly available. 
\end{abstract}

\section{Introduction}\label{introduction}

Reconstructing the so-called \enquote{Tree of Life}, i.e., the phylogenetic tree displaying all living species on earth, is one of the main challenges of biological sciences to-date. Sequence data (e.g., DNA or protein data) on some clusters of species are already available
in databases like GenBank \cite{genbank}, UniProt \cite{uniprot} and KEGG \cite{kegg}, and there are algorithms available to reconstruct the tree of each cluster. In many studies, data
from different loci are combined by building trees from each locus
and subsequently combining them into a so-called \enquote{supertree}. In this setting, it is common that the supertree contains \emph{all} taxa whereas the input trees for each individual locus often do not contain all taxa under consideration. While it is even possible that these input trees are incompatible with one another (which means there exists no perfect supertree, i.e., a supertree containing all the input trees as subtrees), even in the case of compatibility it is not always clear which supertree is best (as there may be more than one tree on the entire taxon set containing all input trees).

For example, suppose we have sampled two sets of species, say $\sS=\{\{1,2,3,4\},$  $\{1,2,3,5\}\}$. So the set of taxa we are investigating is $X=\{1,2,3,4,5\}$, and we want to know if the relationship of these five species is uniquely determined by the two sample sets we have. Then, if the input trees for the two sets are the trees depicted in Figure \ref{fig_NonDec1}, the supertree is \emph{not} unique: Actually, three trees are possible as depicted in Figure \ref{fig_NonDec2}, as all of them are compatible with the input trees. 

\begin{figure}[ht]      \centering\vspace{0.5cm} 
   \includegraphics[width=6cm]{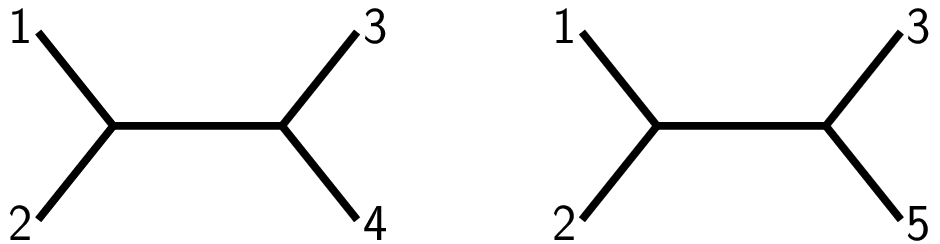}  
    \caption{Two possible input trees for $\sS = \{\{1,2,3,4\},\{1,2,3,5\}\}$ and $X=\{1,2,3,4,5\}$. These trees are compatible, i.e., there exists a supertree on taxon set $X$ containing both of them as subtrees, but this tree is \emph{not} unique, cf. Figure \ref{fig_NonDec2}.} \label{fig_NonDec1}
  \end{figure}

\begin{figure}[ht]      \centering\vspace{0.5cm} 
   \includegraphics[width=8cm]{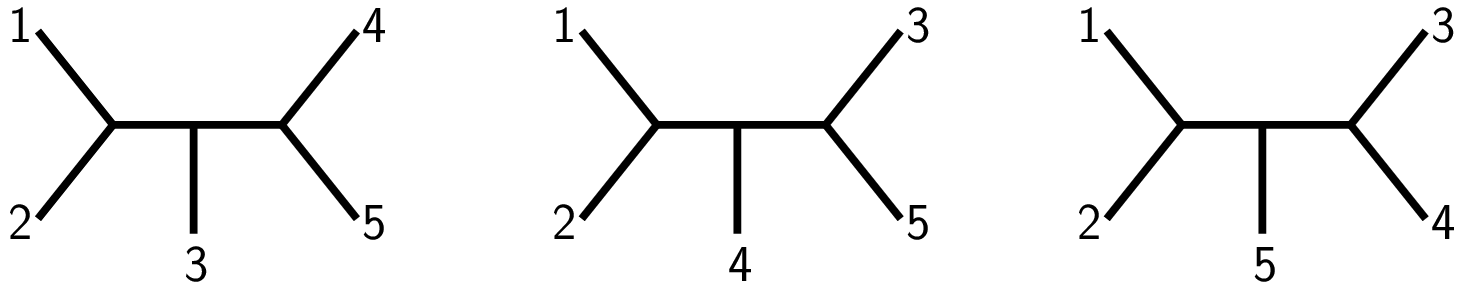} 
    \caption{Three trees with the properties that in each of them, deleting the edge leading to leaf 5 and suppressing the resulting degree-2 vertex will yield the left tree from Figure \ref{fig_NonDec1}, whereas doing the same with leaf 4 will yield the second tree from said figure. This shows that all three trees are supertrees displaying both trees from Figure \ref{fig_NonDec1}, so these trees have more than one possible supertree.  }\label{fig_NonDec2}
  \end{figure}

However, if the input trees instead are the trees depicted on the left in Figure \ref{fig_NonDec3}, then the only possible supertree is the tree on the right-hand side of the same figure. This means that the collection of taxon sets $\sS$ does \emph{not} itself carry the information if the supertree is unique, because it depends on the particular choice of input trees.

\begin{figure}[ht]      \centering\vspace{0.5cm} 
    \includegraphics[width=8cm]{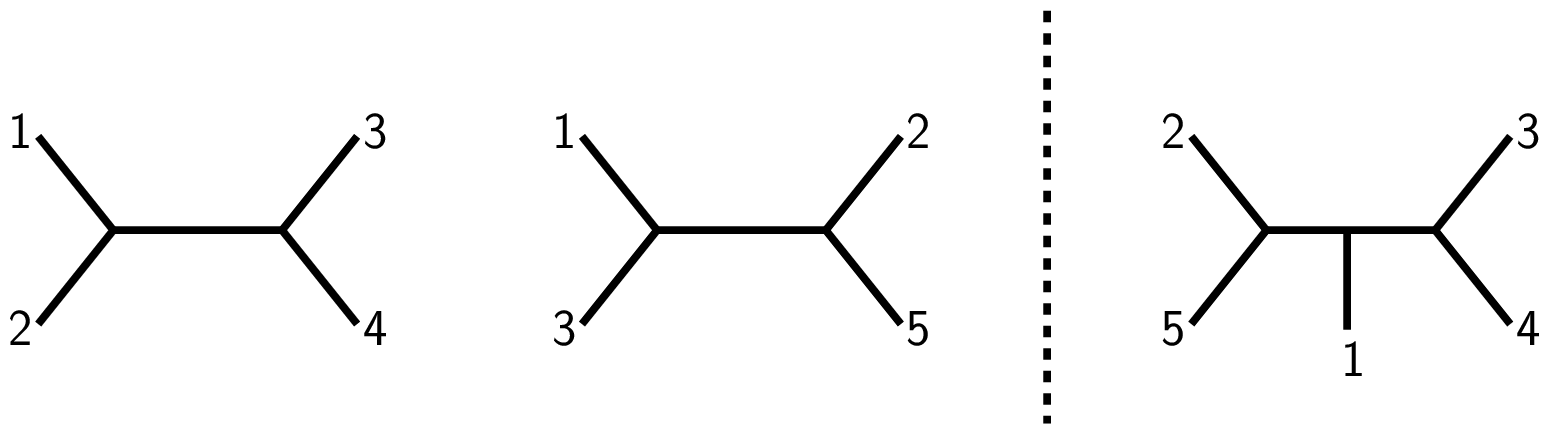} 
    \caption{Two input trees (left) for $\sS = \{\{1,2,3,4\},\{1,2,3,5\}\}$ and $X=\{1,2,3,4,5\}$. Note that the second tree differs from the second tree in Figure \ref{fig_NonDec1} as  leaves 2 and 3 are swapped. The unique supertree of these two input trees is depicted on the right. Note that the uniqueness of the supertree can be easily verified by attaching leaf 5 to all edges of the first input tree and checking if the subsequent deletion of leaf 4 leads to the second input tree or not. It turns out that the only way to combine these two trees is to attach leaf 5 to the edge incident with leaf 2 in the first tree.} 
    \label{fig_NonDec3}
  \end{figure}

In 2010, Sanderson and Steel \cite{sanderson_steel_2010} mathematically characterized so-called \emph{phylogenetically decisive} collections of taxon sets. These sets consist of input taxon sets which have the property that \emph{all} possible compatible trees chosen for the input sets lead to a unique supertree. As an example, consider again  taxon sets $\{1,2,3,4\}$ and $\{1,2,3,5\}$. If we now additionally consider the sets $\{1,3,4,5\}$ and $\{2,3,4,5\}$, i.e., $\sS'=\{\{1,2,3,4\},\{1,2,3,5\},\{1,3,4,5\},\{2,3,4,5\}\}$, it can be shown that \emph{regardless of the particular choice of input trees} the supertree will always be unique (as long as the chosen input trees are compatible). While this manuscript will provide some insight into how to verify this in a more efficient manner, the naive approach of exhaustively checking all possible choices of input trees is provided by Table \ref{tab:examplealltrees} in the appendix. Note that there are only $\binom{5}{4}=5$ different quadruples on five taxa, four of which are contained in $\sS'$. We will later see that these four quadruples carry sufficient information to resolve the missing one. However, the fact that $\sS'$ is phylogenetically decisive can also be verified by explicitly looking at all possible choices of input trees. For example, if we consider the input trees from Figure \ref{fig_NonDec1} again together with the two trees depicted by Figure \ref{fig_NonDec5}, the only possible supertree is the leftmost tree in Figure \ref{fig_NonDec2}. The fact that this works for all possible choices of compatible input trees for the taxon sets in $\sS'$ means that $\sS'$ is phylogenetically decisive. Note that intuitively, this has to do with certain overlap properties of the elements of $\sS'$, on which we elaborate further in the present manuscript.

\begin{figure}[ht]      \centering\vspace{0.5cm} 
   \includegraphics[width=6cm]{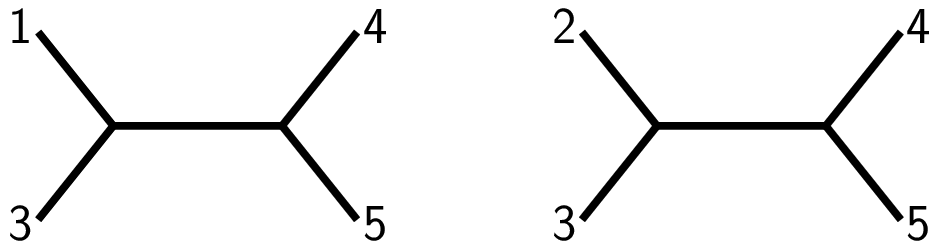} 
    \caption{Two trees on taxon sets $\{1,3,4,5\}$ and $\{2,3,4,5\}$, respectively, which together with the trees from Figure \ref{fig_NonDec1} lead to the unique supertree depicted on the very left of Figure \ref{fig_NonDec2}.} \label{fig_NonDec5}
  \end{figure}

The example of $\sS'$ illustrates that the decision which taxa to sample for each input set $X$ may already ensure that the supertree of all input trees is unique -- even before a single tree is analyzed! In this context, a collection of taxon sets which is phylogenetically decisive can also be referred to as a set of perfect taxon samples. 

However, the computational complexity of the decision problem to decide whether a given collection of taxon sets is phylogenetically decisive when unroooted trees are considered remained unknown until recently, even though it is of high biological interest and even though the problem can be easily shown to be solvable in polynomial time whenever rooted trees are considered.  The problem can be specified as follows:

\begin{problem}[(Unrooted) phylogenetic decisiveness decision problem] \label{decprob} Given a taxon set $X$ with $\vert X\vert =n$ and a set $\sS=\{Y_1,\ldots,Y_k\}$ of subsets of $X$, decide if $\sS$ is phylogenetically decisive considering unrooted trees. 
\end{problem}

In their 2010 study, Steel and Sanderson showed that a collection of taxon sets is phylogenetically decisive (concerning unrooted trees) if and only if it satisfies the so-called \enquote{four-way partition property}. However, this characterization requires exponential computation time, because checking the four-way partition property requires  checking all possible partitions of $X$ into four non-empty subsets. So this 2010 study left the question of the computational complexity of Problem \ref{decprob} open. However, in 2022, it was shown by Parvini, Braught and Fern\'andez-Baca \cite[Theorem 3]{parvini2022} that the problem is coNP-complete. In order to show their result, they exploited the equivalence of the negation of Problem \ref{decprob} to the so-called \emph{No rainbow coloring problem}:

\begin{problem}[No rainbow coloring problem] \label{rainbowprob} Given a 4-uniform hypergraph, i.e., a pair $(V,E)$ of vertices $V$ and a set of hyperedges $E$ where the elements of $E$ consist of four elements of $V$ each, with vertices $V$, decide if there is a partition of $V$ into four non-empty subsets $A,\ B, \ C, \ D \subseteq V$ such that no edge is \enquote{rainbow-colored}, i.e., no edge contains one vertex from each of $A$, $B$, $C$ and $D$.
\end{problem}

While the study by Parvini et al. \cite{parvini2022} finally answers the complexity question for Problem \ref{decprob}, the answer is not ideal for biological applications. In order to sample in a \enquote{perfect} way to guarantee unique supertrees, it is important to be able to efficiently identify phylogenetically decisive collections of taxon sets. Luckily, an integer linear program is available and Problem \ref{decprob} is actually fixed parameter tractable in the number of loci (both results can be found in \cite{parvini2022}), which does give some hope for practical applications, but as the problem is coNP-complete, it cannot be solved efficiently and will thus cause problems for large data sets.

It is therefore the aim of the present manuscript to introduce the concepts of so-called \emph{fixing taxa} (Section \ref{sec:CQ}) as well as \emph{fixing taxon traceable collections of taxon sets} (Section \ref{sec:FTTintro}). The latter will be shown to be collections of taxon sets that are guaranteed to be phylogenetically decisive (Theorem \ref{thm_FTTmain}) and which can be identified in polynomial time (Section \ref{sec_alg}). We also analyze this interesting class of collections of taxon sets more in-depth and compare it to the class of phylogenetically decisive collections of taxon sets. Moreover, we derive lower and upper bounds for the input information needed to exclude or guarantee fixing taxon traceability or phylogenetic decisiveness, respectively Sections \ref{sec_FTTbounds} and \ref{sec_PDbounds}. In particular, with a simple counterexample, we show that a lower bound on the number of required quadruples for phylogenetic decisiveness known from the literature  \cite[Theorem 4(i)]{parvini2022} is unfortunately erroneous, and we subsequently present a corrected (but not tight) lower bound (Theorem \ref{thm_lowerboundphylodec}).

We implemented our algorithm to detect fixing taxon traceable collections of taxon sets as well as an adaptation of Algorithm 2 from \cite{parvini2022} \verb+Find4NRC+. The latter gives a characterization of phylogenetically decisive collections of taxon sets. Our resulting  software package \verb+FixingTaxonTraceR+, which was implemented in \textsf{R} \cite{RCoreTeam2019}, has been made publicly available \cite{githubsoftware}. Last but not least, we used \verb+FixingTaxonTraceR+ to perform a simulation study (cf. Section \ref{sec_diff}) to compare the sizes of the class of phylogenetically decisive collections of taxon sets and its subclass of fixing taxon traceable collections of taxon sets \cite{githubsimulation}. 

\section{Preliminaries}
\label{sec_prelim}

\subsection{Definitions and concepts}
\label{sec_def}
We first introduce some notations and definitions required for presenting our results. We start with the standard notion of phylogenetic trees.

A \emph{binary phylogenetic $X$-tree $\T$}, also often referred to as \emph{binary phylogenetic tree on $X$}, is a connected acyclic graph in which all internal nodes are of degree 3 and which there are $\vert X\vert =n\geq 1$ nodes of degree at most 1, namely the so-called \emph{leaves} or \emph {taxa}, which are bijectively labelled with the $n$ elements of a set $X$, which is also often referred to as \emph{taxon set}. Without loss of generality, we assume $X=\{1,\ldots, n\}$. Whenever there is no ambiguity, we refer to a binary  phylogenetic $X$-tree as \emph{tree} for short. 

Now, let $X=\{1,\ldots, n\}$, let $Y \subset X$ be a subset of $X$ and let $\T$ be a binary phylogenetic $X$-tree. Then, we denote by $\T\vert _{Y}$ the tree which can be derived from $\T$ by deleting all elements of $X\setminus Y$ and by suppressing all resulting nodes of degrees 2 and by deleting all potentially resulting degree-1 vertices which are not labelled by $X$. Note that this procedure is equivalent to taking the minimal spanning tree of $Y$ in $\T$ and suppressing its degree-2 vertices. We also say that $\T$ \emph{displays} $\T\vert _Y$. For example, the phylogenetic $X$-tree (with $X=\{1,2,3,4,5\}$) depicted on the left in Figure \ref{fig_NonDec2} displays all four trees depicted in Figures \ref{fig_NonDec1} and \ref{fig_NonDec5}. 

Let $Y_1,\ldots,Y_k \subset X$, for some $k \in \NN$ and let $\T_1,\ldots,\T_k$ be binary phylogenetic trees on $Y_1,\ldots,Y_k$, respectively. If there exists a binary phylogenetic tree $\T$ which displays all trees $\T_1,\ldots,\T_k$, we call $\T$ a \emph{supertree} of $\T_1,\ldots,\T_k$. Moreover, a set of trees $\{\T_1,\ldots,\T_k\}$ on taxon sets $Y_1, \ldots, Y_k$, respectively, is called \emph{compatible} if there exists a supertree $\T$ on taxon set $X:=\bigcup_{i=1}^k Y_i$ displaying all of the trees 
$\T_1,\ldots,\T_k$.

We are now in a position to define phylogenetic decisiveness.

\begin{definition}\label{def:phylodec} A collection $\sS=\{Y_1,\ldots,Y_k\}$ of subsets of $X$ (i.e., $Y_i \subseteq X \hspace{0.2cm}\forall i=1,\ldots,k$) is said to be \emph{phylogenetically decisive} if for every binary phylogenetic $X$-tree $\T$, the collection $\{\T\vert _{Y_i}: Y_i \in \sS\}$ characterizes $\T$ (i.e., $\T$ is the only tree on $X$ displaying these trees).\end{definition}

Last but not least, in order to work with phylogenetic decisiveness, we require the following crucial concept,  which has similarly been defined in the context of phylogenetic groves, e.g., in \cite{ane_eulenstein} and  \cite{fischer_groves}.

\begin{definition}[Cross quadruples and cross $c$-tuples]\label{CT} \label{CQ} Let $k \in \mathbb{N}_{\geq 1}$. Let $X$ be a (taxon) set and let $\sS=\{Y_1,\ldots,Y_k\}$, where $Y_i \subseteq X \hspace{0.2cm}\forall \ i=1,\ldots,k$. Then, a set $C \subseteq X$ with $\lvert C\rvert=c$ such that $C \not\subseteq Y_i$ for all $i=1,\ldots,k$ is called a \emph{cross $c$-tuple} of $\sS$ or simply \emph{cross tuple} of $\sS$. If $c=4$, $C$ is also often referred to as \emph{cross quadruple} or \emph{CQ} for short.\end{definition}

 Note that the elements of $\mathcal{S}$ are allowed to be empty -- in this case, the empty tree carrying no information on any $c$-tuple is the only choice for such elements. Generally, $\mathcal{S}$ can be thought of as the set of input sets, which all carry some information on taxa under consideration, whereas cross $c$-tuples can be thought of as the $c$-tuples for which no information is present in any of the input sets. Thus, if such tuples are to be resolved uniquely by all possible supertrees, there must be additional information on these tuples stemming from the fact that they are being analyzed together. As we will elaborate, the intersections of the elements of $\mathcal{S}$ play an important role in this regard.

\subsection{Known results on phylogenetic decisiveness}
\label{sec_known}

In 2010, a characterization of phylogenetic decisiveness, which we will use throughout this manuscript, was given by \citet{sanderson_steel_2010} in terms of the following theorem.

\begin{theorem}[adapted from Theorem 2 in \citet{sanderson_steel_2010}, \enquote{Four-way partition property}]$\mbox{}$\\ \label{sanderson_steel}  A collection $\sS=\{Y_1,\ldots,Y_k\}$ of subsets of a taxon set $X$ is phylogenetically decisive if and only if it satisfies the four-way partition property, i.e., if for all partitions of $X$ into four non-empty and non-overlapping subsets $X_1$, $X_2$, $X_3$ and $X_4$, denoted $\pi=X_1\vert X_2\vert X_3\vert X_4$,  there exist taxa $x_i \in X_i$ (for $i =1,2,3,4$) such that the quadruple $\{x_1,x_2,x_3,x_4\}$ is a subset of $Y_j $ for some $j \in \{1,\ldots,k\}$. For such a quadruple, we also say that it \emph{covers} partition $\pi$. 
\end{theorem}

Another result we will use later on is the following, which can be found in \cite{moan}. 

\begin{theorem}[adapted from Theorem 3.10 in \citet{moan}] \label{thm_moan} For every $n\geq 4$, the number $f_n = \binom{n}{4}-(n-4)$ is the smallest number such that
every collection of quadruples $\sS$ on a taxon set $X=\{1,\ldots,n\}$ with $\vert \sS \vert \geq f_n$ is phylogenetically decisive.
\end{theorem}

Before we continue, we present an example to illustrate Theorem \ref{thm_moan}, as this theorem is a crucial ingredient for our own results later on.

\begin{example} \label{CQpossible} Let $X=\{1,2,3,4,5\}$. Then, the set 
$\sS'=\{\{1,2,3,4\},\{1,2,3,5\},$ $\{1,3,4,5\},\{2,3,4,5\}\}$ known from Section \ref{introduction} is phylogenetically decisive even though it has a cross quadruple, namely $\{1,2,4,5\}$. This can be directly verified with Theorem \ref{thm_moan}, which states that if we have at least $f_5=\binom{5}{4}-(5-4)=5-1=4$ quadruples on taxon set $\{1,2,3,4,5\}$  (which is the case here as $\sS'$ contains four quadruples) this set of quadruples is phylogenetically decisive. 
\end{example}

Last but not least, recall that, as mentioned above, it was shown by \citet{parvini2022} that Problem \ref{decprob} is coNP-complete. This is why we are aiming at presenting a  subclass of phylogenetically decisive collections of taxon sets which  can be identified in polynomial time.

\section{Results}\label{results}

 It is the main aim of this section to introduce \emph{fixing taxon traceable} collections of taxon sets -- a class of collections of taxon sets for which phylogenetic decisiveness can be guaranteed and which can be identified in polynomial time.  However, before we can formally define fixing taxon traceability in Section \ref{sec:FTTintro} and present an  algorithm to detect fixing taxon traceability in Section \ref{sec_alg}, we first introduce the concept of fixing taxa in Section \ref{sec:CQ}. Later on, we identify several bounds both for fixing taxon traceability in Section \ref{sec_FTTbounds} and  phylogentic decisiveness in Section \ref{sec_PDbounds}, before we analyze the differences of these two concepts in Section \ref{sec_diff} both analytically as well as statistically.

\subsection{Cross quadruples and fixing taxa}\label{sec:CQ}

In this section, we consider cross quadruples and subsequently define fixing taxa, which -- as we will show -- provide a way to resolve cross quadruples. As a first step, we provide a simple sufficient criterion for phylogenetic decisiveness, which also links the rooted and the unrooted cases of the underlying decision problem.\footnote{Note that it can easily be seen that in the rooted setting, a collection of taxon sets is phylogenetically decisive if and only if it contains all triples, i.e., if there are no \enquote{cross triples}. This is the reason why the decisiveness question can be answered in polynomial time in the rooted case \cite[Theorem 2(ii)]{parvini2022}.} 

\begin{proposition}\label{phylodecisive2}
Let $\sS=\{Y_1,\ldots,Y_k\}$ be a set of subsets of $X$ such that there exists no cross quadruple of $\sS$. Then, $\sS$ is phylogenetically decisive. \end{proposition}

\begin{proof} The statement is a direct consequence of Theorem \ref{sanderson_steel}: If $\sS$ has no cross quadruple, this implies that all possible quadruples are present in $\sS$. Given any 4-partition $X_1\vert X_2 \vert X_3 \vert X_4$ of $X$ and any four taxa $x_i \in X_i$ for $i=1,\ldots,4$, this implies that the 4-partition is covered by the quadruple $\{x_1,x_2,x_3,x_4\}$, which is present in $\sS$, in the sense of Theorem \ref{sanderson_steel}. This completes the proof.
\end{proof}

The fact that phylogenetic decisiveness is coNP-complete in the unrooted case immediately implies that the criterion of having no cross quadruples, which is sufficient as stated in Proposition \ref{phylodecisive2}, cannot be generally necessary: As can be seen in Example \ref{CQpossible},  cross quadruples do not necessarily destroy phylogenetic decisiveness. However, this is only true if the cross quadruples have certain properties. As explained above, a cross quadruple can be thought of as a quadruple unresolved by the input trees -- so in order for it to be resolved uniquely by all possible supertrees, there must be additional information on the quadruple somewhere in the input sets interplay. One way of providing such extra information will be given by the notion of so-called fixing taxa, which we now formally define. Note that we do not limit the definition to the important quadruple case, as a more general version will prove to be useful later on.

\begin{definition}[Fixing taxon] \label{def:fixtaxon} Let $X$ be a set of taxa and $\sS=\{Y_1,\ldots,Y_k\}$ be a set of subsets of $X$. Let $C=\{x_1,\ldots,x_c\}$ with $|C|=c$ be a  cross $c$-tuple of $\sS$.  Let $x \in X \setminus C$. Then, $x$ is called a \emph{fixing taxon of $C$ concerning $\sS$} if for all $i \in \{1,\ldots,c\}$ there is a $j \in \{1,\ldots,k\}$ such that  $\left(C \setminus\{x_i\} \right)\cup \{x\} \subseteq Y_j$.
\end{definition}

As cross quadruples and their fixing taxa are the main notion of the present manuscript, we now present an example to illustrate them.

\begin{example} \label{CQpossible2_part1} Let $X=\{1,2,3,4,5,6\}$ and $\sS:= \left\{ \{1,2,3,5\},\{1,2,4,5\}, \{1,2,4,6\},\{1,2,5,6\}\right.$,\\ $\left.\{1,2,3,6\},\{1,3,4,6\},\{1,3,5,6\}, \{1,4,5,6\},\{2,3,4,5\},\{2,3,5,6\},\{2,3,4,6\}\right\}$. In this case, $\sS$ has four CQs: $ \{1,2,3,4\},$ $\{1,3,4,5\},$ $\{2,4,5,6\}$ and $\{3,4,5,6\}.$ It can be easily verified that $ \{1,2,3,4\}$ has a fixing taxon, namely 6: This is true because all four quadruples $\{1,2,3,6\}$, $\{1,2,4,6\}$, $\{1,3,4,6\}$ and $\{2,3,4,6\}$ are contained in $\sS$. Similarly, $\{2,4,5,6\}$ has fixing taxon $1$. The other two CQs do not have any fixing taxa. 
\end{example}

We now show the role of fixing taxa in resolving cross quadruples.

\begin{proposition} \label{fixpointresolution} Let $X$ be a set of $n \geq 5$ taxa and $\sS=\{Y_1,\ldots,Y_k\}$ be a set of subsets of $X$. Let $\{a,b,c,d\}$ be a CQ of $\sS$ with fixing taxon $x \in X$. Then, any assignment of compatible trees $\T_1,\ldots,\T_k$ on the taxon sets $Y_1,\ldots,Y_k$ resolves $\{a,b,c,d\}$ in a unique way, i.e., for all pairs of supertrees $\T$, $\widetilde{\T}$of $\T_1,\ldots,\T_k$, we have: $\T\vert _{\{a,b,c,d\}}=\widetilde{\T}\vert _{\{a,b,c,d\}}$.
\end{proposition}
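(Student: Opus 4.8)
The plan is to fix an arbitrary supertree $\T$ of $T_1,\ldots,T_k$ and show that the quartet topology $\T|_{\{a,b,c,d\}}$ does not depend on the choice of $\T$, by proving that it is forced by the four quartets that the fixing taxon $x$ creates. Throughout I use that, by the definitions in Section \ref{notation}, all trees in play --- the input trees, the supertrees, and all their restrictions --- are binary, so every restriction to four leaves is a fully resolved quartet rather than a star.

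First I would record what the fixing taxon buys us. By the definition of a fixing taxon, each of $\{a,b,c,x\}$, $\{a,b,d,x\}$, $\{a,c,d,x\}$ and $\{b,c,d,x\}$ is contained in some $Y_j$. Since any supertree $\T$ displays the corresponding binary input tree $T_j$, and $T_j$ already resolves that quartet, we obtain $\T|_{\{a,b,c,x\}}=T_j|_{\{a,b,c,x\}}$, and likewise for the other three quadruples. The crucial observation is that these four quartet topologies are determined by the input trees alone, and hence are \emph{identical for every supertree}: any two supertrees $\T$ and $\tilde{\T}$ necessarily agree on all four of the $x$-quartets.

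The heart of the argument is then a purely combinatorial claim: in any binary tree containing the five leaves $\{a,b,c,d,x\}$, the quartet topology on $\{a,b,c,d\}$ is uniquely determined by the four quartet topologies on $\{a,b,c,x\}$, $\{a,b,d,x\}$, $\{a,c,d,x\}$ and $\{b,c,d,x\}$. To prove this I would pass to the induced five-leaf tree $\T|_{\{a,b,c,d,x\}}$ and classify it by the position of $x$ relative to the quartet tree on $\{a,b,c,d\}$: the leaf $x$ must sit on one of the five edges of that quartet tree (one of the four pendant edges or the single internal edge). Each of the resulting $3\times 5 = 15$ five-leaf trees yields a definite $4$-tuple of $x$-quartet topologies, which one reads off by restriction. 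A short case analysis shows that all $15$ of these $4$-tuples are distinct; equivalently, one can exhibit an explicit inverse rule that recovers the resolution $\{a,b,c,d\}=PQ\mid RS$ from the pattern of the four $x$-quartets. Either way the map (five-leaf tree) $\mapsto$ ($4$-tuple of $x$-quartets) is injective, so the four $x$-quartets determine $\T|_{\{a,b,c,d\}}$.

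Combining the two steps finishes the proof: any two supertrees $\T,\tilde{\T}$ agree on the four $x$-quartets, and those determine the restriction to $\{a,b,c,d\}$, whence $\T|_{\{a,b,c,d\}}=\tilde{\T}|_{\{a,b,c,d\}}$. The main obstacle is precisely the combinatorial claim of the third paragraph, and in particular the fact that one genuinely needs \emph{all four} $x$-quartets: two five-leaf trees with different resolutions of $\{a,b,c,d\}$ can already coincide on three of the four $x$-quartets (for instance when $x$ hangs in a cherry with $a$, differing only on $\{b,c,d,x\}$), so the injectivity must be verified carefully and cannot be deduced from any three of the quartets alone.
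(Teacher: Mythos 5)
Your proof is correct, and it carries out essentially the same finite verification as the paper but in the opposite direction. The paper treats the four $x$-quartet splits as input data and works forward with quartet inference rules, enumerating the possible combinations of splits on $\{a,b,c,x\},\{a,b,d,x\},\{a,c,d,x\},\{b,c,d,x\}$ and checking that each compatible combination forces a unique split of $\{a,b,c,d\}$; you instead enumerate the $15$ binary trees on $\{a,b,c,d,x\}$ (three resolutions of $\{a,b,c,d\}$ times five attachment edges for $x$) and check that the induced $4$-tuples of $x$-quartets are pairwise distinct, so that the common restriction $\T|_{\{a,b,c,d,x\}}$, and hence $\T|_{\{a,b,c,d\}}$, is forced. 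Your preliminary step (every supertree agrees on the four $x$-quartets, since each is contained in some $Y_j$ and is therefore resolved by the binary input tree $T_j$, which every supertree displays) is the same use of the fixing-taxon definition as in the paper, and your injectivity claim is true --- I verified that the $15$ tuples are distinct. What your inversion buys is that all compatibility bookkeeping disappears: only split combinations realized by an actual tree ever occur, so you never need to certify that a combination is incompatible. This is not purely cosmetic: the paper's case 7, $\{ac|bx,\, ad|bx,\, ax|cd\}$, is declared incompatible but is in fact realized by the caterpillar with cherries $\{b,x\}$ and $\{c,d\}$ and middle leaf $a$ (whose fourth $x$-quartet is $bx|cd$ and whose induced split on $\{a,b,c,d\}$ is $ab|cd$, so the proposition's conclusion still holds there); your enumeration covers this tree correctly where the paper's list mislabels it. Your closing caveat --- that three of the four $x$-quartets do not suffice when $x$ sits in a cherry with one of $a,b,c,d$ --- is exactly the content of the paper's Remark \ref{allfourneeded}.
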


\begin{proof} Let $\{a,b,c,d\}$ be a CQ of $\sS$ with fixing taxon $x \in X$. Then, for any assignment of compatible trees $\T_1,\ldots,\T_k$ on taxon sets $Y_1,\ldots,Y_k$, each of the sets $\{a,b,c,x\}$, $\{a,b,d,x\}$, $\{a,c,d,x\}$ and $\{b,c,d,x\}$ is a subset of at least one of the sets $Y_i$ (for $i \in \{1,\ldots, k\}$) by Definition \ref{def:fixtaxon}. However, by Theorem \ref{thm_moan}, if we have at least $f_5=\binom{5}{4}-(5-4)=5-1=4$ quadruples on taxon set $\{a,b,c,d,x\}$, this set of quadruples is phylogenetically decisive for $\{a,b,c,d,x\}$. This implies that regardless of which compatible set of trees we choose for $Y_1,\ldots,Y_k$ and no matter if these trees have a unique supertree or not, at least their respective subtree on taxa $\{a,b,c,d,x\}$ is uniquely determined. In particular, this also shows that the subtree on $\{a,b,c,d\}$ is uniquely determined. More precisely, if $\T$ and $\widetilde{\T}$ are supertrees for $\T_1,\ldots,\T_k$, then we must have $\T\vert_{\{a,b,c,d\}}= \widetilde{\T}\vert_{\{a,b,c,d\}}$. This completes the proof. 
\end{proof}

\begin{example}[Example \ref{CQpossible} continued.] In Example \ref{CQpossible}, taxon $3$ is a fixing taxon of the only CQ $\{1,2,4,5\}$. Thus, by Proposition \ref{fixpointresolution}, this CQ is resolved in the same way in all possible supertrees of any particular choice of input trees corresponding to the taxon sets in $\sS$. 
\end{example}

 So by Proposition \ref{fixpointresolution}, the existence of a fixing taxon is sufficient  for a cross quadruple to be resolved. However, this condition is not necessary, which we illustrate with the following example, which shows that there are phylogenetically decisive collections of taxon sets that contain CQs which do not have any fixing taxa.

\begin{example}[Example \ref{CQpossible2_part1} continued] \label{CQpossible2} It can be checked that $\sS$ as given in Example \ref{CQpossible2_part1} is phylogenetically decisive, e.g., by verifying the four-way partition property (cf. Theorem \ref{sanderson_steel}) explicitly for all possible partitions of $X$, cf. Table \ref{tab_MainEx} in the Appendix.  However, as we have already seen in Example \ref{CQpossible2_part1}, $\sS$ has two CQs which do not have any fixing taxa.
\end{example}

We have seen in Proposition \ref{fixpointresolution} that fixing taxa are very helpful to resolve cross quadruples. In the following section, we will use this concept to finally define \emph{fixing taxon traceable} collections of taxon sets.

\subsection{Fixing taxon traceability and phylogenetic decisiveness } \label{sec:FTTintro}
We are now in the position to turn our attention to \emph{fixing taxon traceable}  collections of taxon sets. As we will show subsequently, these sets are guaranteed to be phylogenetically decisive and can be identified in polynomial time.  

The basis of the class of fixing taxon traceable collections of taxon sets is the idea that cross quadruples can be iteratively resolved: If a cross quadruple has no fixing taxon, but, say, taxon $x$ would be a fixing taxon if another quadruple was resolved which in turn does indeed have a fixing taxon, then the original cross quadruple can \enquote{inherit} its resolvability by resolving the second quadruple first. Thus, we need to distinguish directly and indirectly resolved cross quadruples, which leads to the following recursive definition, which is fundamental for this section. Note that we do not limit this definition to the quadruple case, as we will need this slightly more general version later on in this manuscript.

\begin{definition}[Fixing taxon traceability]\label{def:indirect} Let $X$ be a set of taxa and $\sS=\{Y_1,\ldots,Y_k\}$ be a set of subsets of $X$. Let $C=\{x_1,\ldots,x_c\}\subseteq X$ with $|C|=c\geq 1$ be a cross $c$-tuple of $\sS$. We call $C$ \emph{fixing taxon resolvable} if there is a taxon $x\in X \setminus C$ such that each of the $c$ sets $C_i:= \left(C\setminus \{x_i\}\right) \cup \{x\}$ (for $i=1,\ldots,c$) fulfills one of the following conditions: 
\begin{enumerate}
\item The set $C_i$ is not a cross $c$-tuple.
\item The set $C_i$ is a cross $c$-tuple but has a fixing taxon.
\item The set $C_i$ is a cross $c$-tuple but there exists  a sequence $C_1',\ldots,\C_l'$ of $c$-tuples of $X$ for some $l\in \mathbb{N}_{\geq 2}$ such that: \begin{itemize} \item $C_l'=C_i$, and \item $C_1'$ has a fixing taxon concerning $\sS$, and \item $C_j'$ has a fixing taxon concerning $\sS \cup \{C_1,',\ldots,C_{j-1}'\}$ for all $j=2,\ldots,l$. \end{itemize}

\end{enumerate} 
Moreover, $\sS$ is called \emph{fixing taxon $c$-traceable} if it either has no cross $c$-tuples or if all its cross $c$-tuples are fixing taxon resolvable. Finally, if $\sS$ is fixing taxon $4$-traceable, $\sS$  is called \emph{fixing taxon traceable} for simplicity. 
\end{definition} 

 The idea of Definition \ref{def:indirect} becomes apparent when looking at the problem the other way around: Given $\sS$, one can easily list all quadruples present in $\sS$, say in a list named \enquote{white}, and all CQs, say in a list named \enquote{gray}. Then, for each CQ it can be checked if it has a fixing taxon. Whenever a CQ is found which has a fixing taxon, it can be moved from the gray list to the white list. Now as the white list keeps growing, this may imply that a CQ which did not have a fixing taxon in the beginning as some quadruple was missing on the white list, at some stage gets a fixing taxon (as new quadruples are added to the white list). This also is the main idea behind Algorithm \ref{alg_generalFTT}, which we will state and analyze in Section \ref{sec_alg}. 
 
 We will now illustrate the notion of fixing taxon traceability with an example. 

\begin{example} [Examples \ref{CQpossible2_part1} and  \ref{CQpossible2} continued] \label{3overlapExampleFirstpart}  $X=\{1,2,3,4,5,6\}$ and $\sS:=$ \\$ \left\{ \{1,2,3,5\},\{1,2,4,5\}, \{1,2,4,6\},\{1,2,5,6\},\{1,2,3,6\},\{1,3,4,6\},\{1,3,5,6\},\{1,4,5,6\},\{2,3,4,5\}\right.$,\\ $\left.\{2,3,4,6\},\{2,3,5,6\}\right\}$. As stated before, $\sS$ has four CQs, two of which have a fixing taxon: 
\begin{itemize}
\item $ \{1,2,3,4\}$ with fixing taxon 6 and
\item $\{2,4,5,6\}$ with fixing taxon 1.
\end{itemize}
Moreover, for the other two CQs, we have: 

\begin{itemize}
\item For the CQ $\{1,3,4,5\}$, we consider taxon $c=2$: The sets $C_i$ induced by Definition \ref{def:indirect} then are: $C_1=\{2,3,4,5\}$, $C_2=\{1,2,4,5\}$, $C_3=\{1,2,3,5\}$ and $C_4=\{1,2,3,4\}$. While $C_1$, $C_2$ and $C_3$ are all no CQs as they are contained in $\sS$, we have already seen above that $C_4$ has a fixing taxon. So by Definition \ref{def:indirect}, the CQ $\{1,3,4,5\}$ is fixing taxon resolvable.

\item For the CQ $\{3,4,5,6\}$, we consider taxon $c=1$: The sets $C_i$ induced by Definition \ref{def:indirect} then are: $C_1=\{1,4,5,6\}$, $C_2=\{1,3,5,6\}$, $C_3=\{1,3,4,6\}$ and $C_4=\{1,3,4,5\}$. While $C_1$, $C_2$ and $C_3$ are all no CQs as they are contained in $\sS$, we have already seen above that $C_4$ is fixing taxon resolvable. So by Definition \ref{def:indirect}, the CQ $\{3,4,5,6\}$ is fixing taxon resolvable, too.
\end{itemize}

As all CQs of $\sS$ have a fixing taxon or are fixing taxon resolvable, $\sS$ is fixing taxon traceable by Definition \ref{def:indirect}.
\end{example}

Example \ref{3overlapExampleFirstpart} shows that the set $\sS:= \left\{ \{1,2,3,5\},\{1,2,4,5\}, \{1,2,4,6\},\{1,2,5,6\},\right.$\\ $\left.\{1,2,3,6\},\{1,3,4,6\},\{1,3,5,6\}, \{1,4,5,6\},\{2,3,4,5\},\{2,3,4,6\},\{2,3,5,6\}\right\}$ on taxon set \\$X=\{1,2,3,4,5,6\}$ is fixing taxon traceable, and we have already seen in Example \ref{CQpossible2} that $\sS$ is phylogenetically decisive. It is the main aim of this section to show that this is not a coincidence. 

In fact, we now state and prove the main result of the present section, which shows why fixing taxon traceable sets are so important: In the biologically relevant case of $c=4$, fixing taxon traceable collections of taxon sets can be guaranteed to be phylogenetically decisive.

\begin{theorem}\label{thm_FTTmain}
Let $X$ be a set of taxa and $\sS=\{Y_1,\ldots,Y_k\}$ be a set of subsets of $X$ which is fixing taxon traceable. Then, $\sS$ is also phylogenetically decisive.
\end{theorem}

\begin{proof} If $\sS$ has no CQs, there is nothing to show -- it is clear that in this case, $\sS$ is fixing taxon traceable by Definition \ref{def:indirect} and it is also phylogenetically decisive by Proposition \ref{phylodecisive2}. So we may assume from now on that $\sS$ does indeed have at least one  CQ. As $\sS$ is fixing taxon traceable, by definition we can order the CQs of $\sS$ such that the first one has a fixing taxon and can thus be considered \enquote{resolved}, and the subsequent ones eventually can be resolved iteratively, too, so they  either have a fixing taxon or are fixing taxon resolvable. At any point in the iteration, this means that if a given CQ is fixing taxon resolvable, we can apply Proposition \ref{fixpointresolution} and consider the CQ as \enquote{resolved}; or, in other words, add it to $\sS$. Note that we in this way extend $\sS$ without extending the set of quadruples it resolves -- in fact, we only add CQs which are at least fixing taxon resolvable.  Thus, $\sS$ gets larger and larger until it contains $\binom{|X|}{4}$ elements (as it is fixing taxon traceable, the process by definition does not stop before we have added \emph{all} possible quadruples on $X$ to $\sS$), but in each iteration, we only add a quadruple to $\sS$ that by Definition \ref{def:indirect} in that iteration has a fixing taxon and thus is by Proposition \ref{fixpointresolution} uniquely resolved by all supertrees of compatible input trees $\T_1,\ldots,\T_k$ on $Y_1,\ldots, Y_k$. This shows that $\sS$ is phylogenetically decisive and thus completes the proof.
\end{proof}
 
So by Theorem \ref{thm_FTTmain} we know that all fixing taxon traceable collections of taxon sets are also phylogenetically decisive. The opposite direction is, however, unfortunately not true. We now present an explicit example for a set $\sS$ of taxon sets which is phylogenetically decisive but \emph{not} fixing taxon traceable.

\begin{example}\label{badexample} Let $X=\{1,2,3,4,5,6\}$ and $\sS=\left\{\{1,2,3,5\},\{1,2,4,5\},\{1,2,4,6\},\{1,3,4,6\}\right.$,\\ $\left. \{1,3,5,6\}, \{1,4,5,6\}, \{2,3,4,5\}, \{2,3,4,6\}, \{2,3,5,6\}\right\}$. This set is indeed phylogenetically decisive, which can be verified with the help of Theorem \ref{sanderson_steel} (cf. Table \ref{tab_jannesEx} in the Appendix). However, this set has the following cross quadruples: $\{1, 2, 3, 4\}, \{1, 2, 3, 6\}, \{1, 2, 5, 6\}, \{1, 3, 4, 5\}, \{2, 4, 5, 6\}, \{3, 4, 5, 6\}$. None of these CQs has a fixing taxon, which is why $\sS$ is \emph{not} fixing taxon traceable. 
\end{example}

Theorem \ref{thm_FTTmain} together with Example \ref{badexample} highlights the fact that fixing taxon traceable collections of taxon sets are indeed a proper subset of all phylogenetically decisive collections of taxon sets. However, as we will show in the following section, they have a huge advantage: While deciding whether a given collection of taxon sets is phylogenetically decisive is a coNP-complete problem, we will now show that identifying fixing taxon traceable collections of taxon sets can be done in polynomial time.

 \subsection{\texorpdfstring{On the complexity of fixing taxon $c$-traceability}{On the complexity of fixing taxon c-traceability}}\label{sec_alg}

 In this section, we present a polynomial time algorithm which decides if a given set $\sS$ of taxon sets is fixing taxon traceable. In order to do so, we first need to introduce the concept of the \emph{3-overlap graph}, which we will use as a means to visualize cross quadruples and their iterative resolution in fixing taxon traceable collections of taxon sets. As before, we will introduce a more general version considering $c\geq 1$ rather than only the biologically relevant case of $c=4$, as we will need this later on.

\begin{definition}[(colored) $(c-1)$-overlap graph $K_{c+1}^n$ and the restriction $K_{c+1}^n \vert_{X'}$] Let $X$ be a set of $|X|=n$ taxa and let $\sS=\{Y_1,\ldots,Y_k\}$ be a set of subsets of $X$. Let $c \in \mathbb{N}_{\geq 1}$. Then, the \emph{$(c-1)$-overlap graph} $K_{c+1}^n=(V,E)$ of $n$ and $c$ is the graph which can be derived as follows: Its vertex set is $V=\binom{X}{c}$ (i.e., $V$ contains all possible $c$-tuples of $X$), and its edge set $E$ contains all pairs of vertices which overlap in $c-1$ taxa. We call $K_{c+1}^n$ \emph{colored} with respect to $\sS$, if all vertices corresponding to cross $c$-tuples of $\sS$ are colored gray and all other vertices are colored white. Moreover, in a slight abuse of notation, for $X' \subset X$ we define $K_{c+1}^n \vert_{X'}$ to be the restriction of $K_{c+1}^n$ to all vertices (and their induced edges) that correspond to $c$-tuples that are subsets of  $X'$. \end{definition}

\begin{remark} Note that the name of the $(c-1)$-overlap graph, namely $K^n_{c+1}$ with $c+1$ in the index rather than $c-1$, stems from the fact that its structure consists of many copies of the complete graph $K_{c+1}$ on $c+1$ vertices which are \enquote{glued together} in a certain way. We will elaborate on this structure later on.
\end{remark}

For our purposes, the case $c=4$, which leads to quadruples (4-tuples) and the 3-overlap graph $K_5^n$, is the most important case. 

\begin{example} [Examples \ref{CQpossible2_part1},   \ref{CQpossible2} and \ref{3overlapExampleFirstpart} continued] \label{3overlapExample}  $X=\{1,2,3,4,5,6\}$ and $\sS:=\left\{ \{1,2,3,5\}\right.$,\\ $\left.\{1,2,4,5\}, \{1,2,4,6\},\{1,2,5,6\},\{1,2,3,6\},\{1,3,4,6\},\{1,3,5,6\},\{1,4,5,6\},\{2,3,4,5\},\{2,3,4,6\}\right.$,\\ $\left.\{2,3,5,6\}\right\}$. As stated before, $\sS$ has four CQs: $ \{1,2,3,4\},$ $\{1,3,4,5\},$ $\{2,4,5,6\}$ and $\{3,4,5,6\}.$  We construct the colored 3-overlap graph $K_5^6$ as depicted by Figure \ref{beforeIT1}. Note that the CQs are all gray and all quadruples present in $\sS$ are white. 
\end{example}

We now state Algorithm \ref{alg_generalFTT}, which allows us to identify fixing $c$-traceable collections of taxon sets by explicitly checking for each white quadruple with a gray neighbor in the colored $(c-1)$-overlap graph $K_{c+1}^n$ if one of its taxa acts as a fixing taxon for said neighbor.\footnote{Note that Algorithm \ref{alg_generalFTT} uses the graph structure and graph coloring procedure we also use throughout this manuscript. However, this algorithm does not depend on the underlying graph structure; it only requires the underlying $c$-tuples. This is why we also state an alternative but equivalent version of Algorithm \ref{alg_generalFTT} in the appendix of this manuscript, cf. Algorithm \ref{alg_generalFTT_WOGRAPH} therein. The two algorithms only differ in lines 20--22.} 

\newpage 

{\small
\begin{algorithm}[H]
\caption{Fixing taxon $c$-traceability }\label{alg_generalFTT}
\LinesNumbered
 \SetKwInOut{Input}{Input}\SetKwInOut{Output}{Output}
 \vspace{0.15 cm}
 \Input{ $c \in \mathbb{N}_{\geq 1}$\\
 $n \in \mathbb{N}_{\geq c}$ \\
 set $\sS=\{Y_1,\ldots,Y_k\}: \ Y_i \subseteq X=\{1,\ldots,n\} \wedge \lvert Y_i\rvert \geq c 
 \ \forall i=1,\ldots,k$}
 \Output{\textsf{TRUE} if $\sS$ is fixing taxon $c$-traceable, \textsf{FALSE} else}
 \Init{}{
 $X \gets \{1,\ldots,n\}$\\
 \For{$i\gets 1$ \KwTo $\binom{n}{c}$}{
 $Q(i)\gets \mbox{$i^{th}$ $c$-tuple of $\binom{X}{c}$}$ \\
 $white(Q(i)) \gets 0$ }  
 $whiteCounter\gets 0$\\
 $newWhites \gets \emptyset$\\
}

\For{$i\gets 1$ \KwTo $ \binom{n}{c} $ }
{\For{$j \gets 1$ \KwTo $k$}{
\If{$Q(i) \subseteq Y_j$}{$white(Q(i))\gets 1$ \\ $newWhites \gets newWhites \cup \{Q(i)\}$\\
$whiteCounter \gets whiteCounter +1$\\
break\\ }
}}

\While{$whiteCounter < \binom{n}{c}$ \& $newWhites \neq \emptyset$}{

$tuple \gets newWhites(1)$\\
$X'\gets X \setminus tuple$ \\

\For{$i \gets 1$ \KwTo $|X'|=n-c$}{
$x \gets X'(i)$

\tcc*[f]{Next: Check if there are $c$ white vertices in the $K_{c+1}$ subgraph  employing numbers $tuple \cup \{x\}$ of graph $K_{c+1}^n$ }\\

$G \gets K_{c+1}^n\vert_{tuple \cup \{x\}}$\\
\If { $\sum\limits_{j: Q(j) \in G} 
 white(Q(j))==c$}{ \For{$j\gets 1$ \KwTo $|V(G)|$} {\If{ $white(Q(j))==0$}{$white(Q(j))=1$\\ $newWhites \gets newWhites \cup \{Q(j)\}$\\
 $whiteCounter \gets whiteCounter+1$\\ break}}
 }
}
$newWhites\gets newWhites\setminus\{newWhites(1)\}$\\
}
\If{$whiteCounter==\binom{n}{c}$}{\Return{$\mathsf{TRUE}$}}
\Else{\Return{$\mathsf{FALSE}$}}
\end{algorithm}
}

Before we continue with a run-time analysis of Algorithm \ref{alg_generalFTT}, we present a detailed example for the biologically relevant case where $c=4$, i.e., the case where we have quadruples and consider their 3-overlap graph.

\begin{example}\label{3overlapExampleNEW} We continue Examples \ref{CQpossible2_part1},  \ref{CQpossible2}, \ref{3overlapExample} and \ref{3overlapExampleFirstpart} with $X$ and $\sS$ as stated there. As we have already seen, $\sS$ has four CQs: $ \{1,2,3,4\},$ $\{1,3,4,5\},$ $\{2,4,5,6\}$ and $\{3,4,5,6\}$, and the colored 3-overlap graph is depicted by Figure \ref{beforeIT1}. Note that the CQs are all gray and all quadruples present in $\sS$ are white, so we have already completed the initialization step of the algorithm. 

\begin{figure}[ht]      \centering\vspace{0.5cm} 
      \includegraphics[width=10cm]{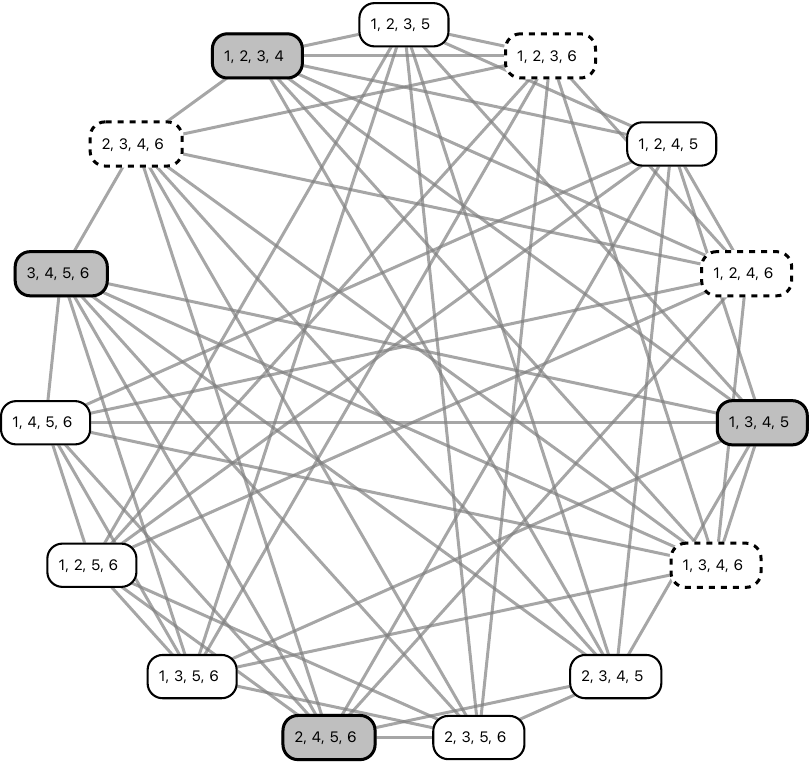}
    \caption{The 3-overlap graph for $n=6$ and $c=4$. The vertices have been colored to highlight the cross quadruples of the set $\sS:= \left\{ \{1,2,3,5\},\{1,2,4,5,\{1,2,4,6\},\{1,2,5,6\},\{1,2,3,6\}\}\right.$,\\ $\left. \{1,3,4,6\},\{1,3,5,6\}, \{1,4,5,6\},\{2,3,4,5\}\{2,3,5,6\},\{2,3,4,6\}\right\}$. The CQ $\{1,2,3,4\}$ is the one we first want to resolve using fixing taxon 6. The respective resolved neighbors employing taxon 6 are highlighted by dashed boxes. }\label{beforeIT1}
\end{figure}

 In lines 8--14 of Algorithm \ref{alg_generalFTT}, we make sure that all white nodes are regarded as newly colored white. The counter is set to the number of white nodes, which is 11. As this is less than the total number of quadruples of $X$, which is $\binom{6}{4}=15$, and as the set $newWhites$ of newly colored white nodes is not empty, we enter the \textsf{while}-loop. 

Next, we take the first newly colored white quadruple into account; without loss of generality we choose $\{1,2,3,5\}$. Now we have to consider the taxa in $X\setminus\{1,2,3,5\}=\{4,6\}$. Let us start with taxon $4$, i.e., we now want to find out if any of the gray neighbors of $\{1,2,3,5\}$ that contain taxon 4 can be colored white. Therefore, we construct graph $G$ as in line 20 of Algorithm \ref{alg_generalFTT}: We consider the restriction of $K_5^6$ to taxon set $\{1,2,3,4,5\}$ and check if this graph of five vertices (which is actually isomorphic to $K_5$, the complete graph on five vertices) contains four white vertices (in this case, the shared taxon of these four white vertices would act as a fixing taxon for the fifths quadruple in $G$). However, as both $\{1,2,3,4\}$ and  $\{1,3,4,5\}$ are gray, nothing gets re-colored.

Next, we consider taxon $6$, i.e., we now want to find out if any of the gray neighbors of  $\{1,2,3,5\}$ containing taxon 6 can be colored white. However, the restriction $K_5^6$ to taxon set $\{1,2,3,5,6\}$ already consists of five white vertices. So again, nothing gets re-colored. 

We delete $\{1,2,3,5\}$ from the list of newly white vertices and see that still not all nodes are white. Moreover, we still have newly white vertices which have not been analyzed yet, so we enter the \textsf{while}-loop once more. 

We now again take the first element of the set $newWhites$, without loss of generality $\{1,2,3,6\}$. Now we have to consider the taxa in $X\setminus\{1,2,3,6\}=\{4,5\}$. Let us start with taxon $4$, i.e., we now want to find out if any of the gray neighbors of  $\{1,2,3,6\}$ containing taxon 4 can be colored white. Again, we restrict $K_5^6$, this time to taxon set $\{1,2,3,4,6\}$, and this time see that indeed four vertices of this subgraph $G$ are white (namely $\{1,3,4,6\}$, $\{1,2,4,6\}$, $\{1,2,3,6\}$ and  $\{2,3,4,6\}$),  and one is gray (namely  $\{1,2,3,4\}$). So the latter quadruple now gets colored white and added to the newlyWhite list (as clearly, taxon 6 acts as a fixing taxon for this quadruple) in lines 24--25. The counter is also increased by 1 to 12 as we now in total have 11+1=12 white vertices.

Next, we consider taxon $5$, i.e., we now want to find out if any of the gray neighbors of $\{1,2,3,6\}$ containing taxon 5 can be colored white. However, as all vertices in the restriction of $K_5^6$ to $\{1,2,3,5,6\}$ are already white, nothing gets re-colored. This finishes the analysis of $\{1,2,3,6\}$. This quadruple thus gets removed from the list $newWhites$. However, the counter still has a value of $12<15$ and $newWhites\neq \emptyset$, so we start the \textsf{while}-loop yet again.

We repeat this procedure until finally all quadruples are colored white (which in this case is indeed possible as $\sS$ is fixing taxon traceable, cf. Example \ref{3overlapExampleFirstpart}). This implies that Algorithm \ref{alg_generalFTT} returns $TRUE$, so it does detect that $\sS$ is fixing taxon traceable.
\end{example}

Next, we want to prove that Algorithm \ref{alg_generalFTT} identifies all fixing taxon traceable collections of taxon sets in polynomial time. Again, we consider the general case of fixing taxon $c$-traceability.

\begin{proposition}\label{prop_runtime} 
 Given $c \in \mathbb{N}_{\geq 1}$ and a set $\sS=\{Y_1,\ldots,Y_k\}$ of subsets of $X$, where $\vert X\vert =n\geq c$, the question whether $\sS$ is fixing taxon $c$-traceable can be answered in at most  $\mathcal{O}\left(n^c\cdot \max\{k,n^2\} \right)$ steps using Algorithm \ref{alg_generalFTT}. In particular, as long as $k$ is polynomial in $n$, the question can be answered in polynomial time. 
\end{proposition}

\begin{proof} We first analyze the runtime of Algorithm \ref{alg_generalFTT}. 
The initialization step requires $\binom{n}{c}$ steps of constant time. Lines 8--14 simply list and count the potential quadruples on $X$ that are actually present in $\sS$ and should therefore be turned \enquote{white}. The two nested \textsf{for}-loops require $\binom{n}{c} \cdot k$ steps.

The most interesting part is the \textsf{while}-loop starting in line 15 of the algorithm. For each white $c$-tuple $C$ and each taxon $x$ of the remaining $\lvert X\rvert-c=n-c$ taxa, it considers $C$ together with all $c$-tuples that can be formed by joining a $(c-1)$-subset of $C$ with $\{x\}$ (these $c$-tuples form $G$, which is isomorphic to $K_{c+1}$) and sums up their \enquote{color status}, which is either 0 (gray) or 1 (white). This sum of $c+1$ summands (which are actually precisely the elements of the restriction of $K_{c+1}^n$ to the vertex set $C \cup \{x\}$) can be derived in constant time each from the list $white$, which means that calculating the sum takes $c+1$ constant time steps. 

Then, the \textsf{while}-loop calls the \textsf{for}-loop in line 18 of the algorithm, all of whose steps can be performed in constant time and which runs through all $c+1$ elements of $G$, so we have another $c+1$ constant time steps. Note that we repeat these in total $2(c+1)$ constant time steps at most $\binom{n}{c} \cdot (n-c)$ times, as the number of white $c$-tuples is bounded by $\binom{n}{c}$ and as we consider all $n-c$ remaining taxa. Therefore, in summary, the \textsf{while}-loop takes at most  $(2c+2)\cdot \binom{n}{c} \cdot (n-c) \leq (2n+2)\cdot \binom{n}{c} \cdot n$ steps (where the inequality uses $0 < c \leq n$).

In summary, the maximum number of steps required by Algorithm \ref{alg_generalFTT} is at most:

\begin{align*}\binom{n}{c} \cdot k+ (2n+2)\cdot \binom{n}{c} \cdot n &= \binom{n}{c} \cdot \left( k+2n^2+2n\right).\end{align*}

This gives a total complexity of $\mathcal{O}\left(n^c\cdot \max\{k,n^2\} \right)$, which completes the first part of the proof.

It remains to show that Algorithm \ref{alg_generalFTT} returns \textsf{TRUE} if and only if the input set $\sS$ is fixing taxon $c$-traceable, but this is simple as the algorithm basically only iteratively goes through Definition \ref{def:indirect}. In particular, as Algorithm \ref{alg_generalFTT} exhaustively checks for each of the quadruples in $\sS$ if this quadruple helps together with \emph{any} taxon to fix \emph{any} cross quadruple, it is clear that no quadruple that has a fixing taxon can be missed. For every such quadruple then, in turn, it is again checked if it helps together with \emph{any} taxon to fix \emph{any} cross quadruple. Thus, it is clear that all fixing taxon resolvable quadruples are found, too. 
On the other hand, however, Algorithm \ref{alg_generalFTT} cannot produce any \enquote{false positives}, either, as the crucial checks in lines 21--26, which decide if a CQ gets colored white, are merely checks if taxon $x$ is a fixing taxon for the current tuple under investigation; i.e., this test is simply a realization of Definition \ref{def:fixtaxon}. This observation completes the proof.
\end{proof}

\begin{remark} It can easily be seen that other approaches than Algorithm \ref{alg_generalFTT} would also lead to a solution and -- depending on the input set $\sS$ -- possibly to faster run times. For instance, instead of checking for all white tuples if they help to re-color gray tuples, one could consider all gray tuples and check if they can be re-colored using the existing white tuples. Similarly, one could check all taxa and see if they act as fixing taxa for any gray tuple. However, we refrain here from investigating these modified algorithms further, because the aim of this section is only to show that there is a way of checking fixing taxon $c$-traceability in polynomial time, not to investigate several possible approaches.
\end{remark}

Now that we know that fixing taxon traceable collections of taxon sets can be identified in polynomial time, we can turn our attention to another important question: How many quadruples are needed in a set $\sS$ of taxon sets to have a chance to be a) fixing taxon traceable or b) phylogenetically decisive?

\subsection{Bounds on the number of \texorpdfstring{$c$}{c}-tuples needed for fixing taxon \texorpdfstring{$c$}{c}-traceability}\label{sec_FTTbounds}

In the previous sections we have seen that fixing taxon traceable collections of taxon sets form a subset of all phylogenetically decisive collections of taxon sets, and that -- while the latter are hard to identify -- the former can be identified in polynomial time. However, as the two sets are different (which we will elaborate on more in-depth in Section \ref{sec_diff}), the questions of how many quadruples are actually needed in a set $\sS$ of taxon sets at least to have a chance that it is fixing taxon traceable or phylogenetically decisive or even to be sure that it fulfills either of these properties have to be answered separately.

We start with the following bound for guaranteed fixing taxon $c$-traceability.

\begin{theorem}\label{thm_upperbound}
Let $c\in \mathbb{N}_{\geq 1}$. Let $\sS$ be a set of subsets of taxon set $X$ with $|X|=n\geq c$. Let $k$ be the number of $c$-tuples that are subsets of the elements of $\sS$, i.e., there are $\binom{n}{c}-k$ cross $c$-tuples  induced by $\sS$. Then, if $k\geq \binom{n}{c}-n+c$, $\sS$ is fixing taxon $c$-traceable. Additionally, if $c=4$, $\sS$ is phylogenetically decisive.
\end{theorem}

Before we prove the theorem, we note that Theorem \ref{thm_upperbound} together with Theorem \ref{thm_FTTmain} provides an alternative proof of Theorem \ref{thm_moan}, because (by Theorem \ref{thm_upperbound}) $k\geq \binom{n}{4}-n+4$ implies fixing taxon traceability and (by Theorem \ref{thm_FTTmain}) all fixing taxon traceable sets are phylogenetically decisive.

\begin{proof}[Proof of Theorem \ref{thm_upperbound}] 
Concerning the first assertion, we show that if $k\geq \binom{n}{c}-n+c$, $\sS$ is fixing taxon $c$-traceable. If $c=4$, phylogenetical decisiveness then follows by Theorem \ref{thm_FTTmain}.

 \begin{figure}[ht]      \centering\vspace{0.5cm} 
      \includegraphics[width=5.5cm]{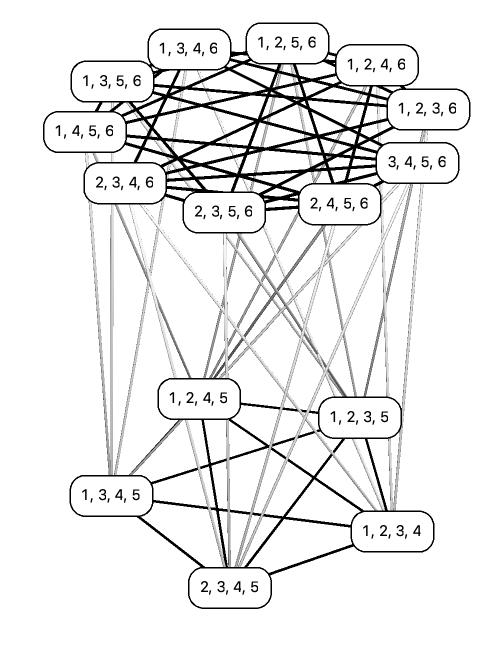} \hspace{0.5cm}
        \includegraphics[width=5.5cm]{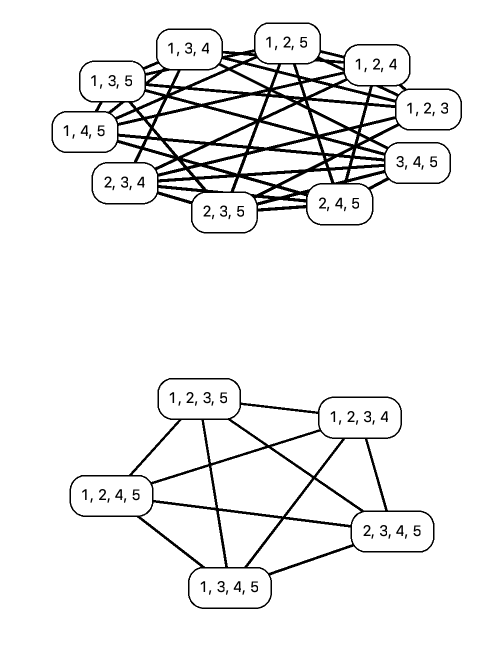} 
    \caption{A depiction of the graph $K_5^6$ (left) as well as $K_4^5$ (top right) and $K_5^5$ (bottom right). Note that $K_5^5$ is isomorphic to $K_5$, and note that $K_5^6$ contains two subgraphs which are isomorphic to $K_4^5$ and $K_5^5$, respectively. In the proof of Theorem \ref{thm_generaloverlapcoloring}, $K_5^5$ corresponds to $G_1$ and $K_4^5$ corresponds to $G_2$.}\label{structureKnc}
  \end{figure}

Let $\widetilde{k}:=\binom{n}{c}-k$ denote the number of cross $c$-tuples induced by $\sS$. Then, we note that every cross $c$-tuples has at most $\widetilde{k}-1$ neighboring cross $c$-tuples in the colored $(c-1)$-overlap graph $K_{c+1}^n$. 

A second fact that we note is that if a cross $c$-tuple has fewer than $n-c$ gray neighbors in $K_{c+1}^n$, i.e., if fewer than $n-c$ cross $c$-tuples are neighbors of the original cross $c$-tuple, then this cross $c$-tuple must have a fixing taxon. To see this, first recall that each $c$-tuple $C$ for each taxon $x \in X\setminus C$ has $c$ neighbors in $K_{c+1}^n$  which contain $c-1$ taxa of $C$ together with $x$. As $\lvert X\setminus C\rvert=n-c$, we know that each $c$-tuple has $c(n-c)$ neighbors in $K_{c+1}^n$. For each $x$, the subgraph consisting of the $c$-tuples employing taxa $C\cup \{x\}$ is a $K_{c+1}$, i.e., a complete graph with $c+1$ vertices. Now, if there is \emph{no} fixing taxon for a cross $c$-tuple $C$, this implies that in all these $K_{c+1}$-subgraphs to which $C$ belongs, one more $c$-tuple (other than $C$) is a cross $c$-tuple. 
So if a cross $c$-tuple has fewer than $n-c$ gray neighbors in $K_{c+1}^n$, at least one of these $(n-c)$ many $K_{c+1}$-subgraphs must be all white except for $C$. The unique taxon used in the vertices of this $K_{c+1}$ which is not contained in $C$ is thus a fixing taxon for $C$.

This shows that if we can guarantee that a cross $c$-tuple has fewer than $n-c$ gray neighbors in the colored $(c-1)$-overlap graph $K_{c+1}^n$, we know that it can be resolved. So if we have that the maximum possible number $\widetilde{k}-1$ of gray neighbors of any cross $c$-tuple, or -- in other words -- one less than the total number of cross $c$-tuples of $\sS$, is less than $n-c$, we can thus even guarantee that \emph{every} cross $c$-tuple has a fixing taxon. However, by assumption we have $k\geq \binom{n}{c}-n+c$. This implies $\binom{n}{c}-k \leq n-c$, which (using $\widetilde{k}=\binom{n}{c}-k$) shows that $\widetilde{k}\leq n-c$ and thus also $\widetilde{k}-1< n-c$. So indeed there cannot be any cross $c$-tuple that does not have a fixing taxon. This completes the proof.

\end{proof}

\begin{remark} Before we continue, we point out that it was shown by \citet[Theorem 3.3]{moan} that for all  $k\leq \binom{n}{4}-n+3$, there are sets $\sS$ with $k$ quadruples which are \emph{not} phylogenetically decisive. Thus, these sets (by Theorem \ref{thm_FTTmain}) are also not fixing taxon traceable. This implies that the bound provided by Theorem \ref{thm_upperbound} is sharp both for phylogenetic decisiveness and fixing taxon traceability (for $c=4$).
\end{remark}

While for the biologically relevant case of $c=4$, Theorem \ref{thm_upperbound} provides a minimal number $k$ of quadruples for which both fixing taxon traceability and phylogenetic decisiveness can be guaranteed, the question of how many quadruples are needed so that a set $\sS$ consisting of these quadruples \emph{can} be phylogenetically decisive or fixing taxon traceable has a different flavor. Concerning phylogenetic decisiveness, this question has been tackled in the literature, but as we will see later on, the lower bound of $\binom{n-1}{3}$ on the number of quadruples needed for a $\sS$ on taxon set $X=\{1,\ldots,n\}$ to be phylogenetically decisive suggested by \citet[Theorem 4(i)]{parvini2022} is unfortunately incorrect. In fact, we will show how to construct phylogenetically decisive collections of taxon sets employing precisely $\binom{n-1}{3}-1$ quadruples. Remarkably, the erroneous lower bound for decisiveness turns out to be the correct and sharp lower bound for fixing taxon traceability. This is a direct consequence of the following theorem, which establishes a lower bound for the number of quadruples required in a set $\sS$ of taxon sets in order for $\sS$ to be fixing taxon traceable.

\begin{theorem}
\label{thm_generaloverlapcoloring} Let $n$, $c \in \mathbb{N}_{\geq 1}$: $n\geq c \geq 1$. Then, the following two statements hold: 
\begin{enumerate}
    \item There is a set $\sS$ of $c$-tuples of the set $\{1,\ldots,n\}$ with $|\sS|= \binom{n-1}{c-1}$ such that $\sS$ is fixing taxon $c$-traceable.    \item Let $\sS$ be a set of $c$-tuples of the set $\{1,\ldots,n\}$ which is fixing taxon $c$-traceable. Then, $|\sS|\geq \binom{n-1}{c-1}$.
\end{enumerate}
\end{theorem}

Before we can prove Theorem \ref{thm_generaloverlapcoloring}, we need to prove the following lemma.

\begin{lemma}\label{lem_Knc_connected} Let $n$, $c \in \mathbb{N}_{\geq 1}: n \geq c$. Then, the $(c-1)$-overlap graph $K_{c+1}^n$ of $n$ is connected.
\end{lemma}

\begin{proof} Note that by definition, $K_{c+1}^n$ has $\binom{n}{c}$ vertices, all of which are $c$-tuples. If $n=c$, $K_{c+1}^n$ contains only one vertex and there is nothing to show. Similarly, if $c=1$, there are $n$ vertices, which correspond to the singleton sets $\{1\},\ldots, \{n\}$, and which are all connected with one another in the 0-overlap graph (i.e., the form a complete graph $K_n$) as they have a 0 overlap, so there, too, is nothing left to show. Therefore, we now consider the case $n>c>1$, in which case we have $\binom{n}{c}>1$ vertices. So let $u=(u_1,\ldots,u_{c})$ and $v=(v_1,\ldots,v_{c})$ be vertices in $K_{c+1}^n$. We now argue why there exists a path from $u$ to $v$ in $K_{c+1}^n$.

If $|u \cap v|\geq c-1 $, either $u=v$ (if $|u \cap v|=c$) or there is an edge $e=\{u,v\}$ contained in $K_{c+1}^n$ (if $|u \cap v|=c-1$), so there is nothing to show. So assume  $|u \cap v|< c-1$. In this case, we now show that there is a neighbor $w$ of $u$ in $K_{c+1}^n$ with $|w \cap v|=|u \cap v|+1 $, so $w$ is \enquote{one step closer} to $v$ than $u$. In order to find $w$, let $a \in u \setminus v$ and $b \in v \setminus u$, and set $w:=\left(u \setminus \{a\}\right) \cup \{b\}$. Obviously, this gives $|w \cap v|=|u \cap v|+1 $, as $w$ contains one more element of $v$ than $u$, namely $b$. We can proceed in the same manner, i.e., we can always find a neighbor of the current node that is one step closer to $v$ as long as its intersection with $v$ contains less than $c-1$ vertices. This shows that we can iteratively find a  path from $u$ to $v$ in $K_{c+1}^n$ and thus completes the proof. 
\end{proof}

Before we now finally proceed to prove Theorem \ref{thm_generaloverlapcoloring}, we note that the recursive idea of the second part of the proof is somewhat related to the underlying recursive approach of the proof of \cite[Theorem 4(ii)]{parvini2022}, combined with the added idea of colored overlap graphs. This shows that the basic idea of  \cite[Theorem 4(ii)]{parvini2022} is still relevant, even though the result is unfortunately erroneous (as its bound is stated for all phylogenetically decisive collections of taxon sets rather than only fixing taxon traceable ones).

\begin{proof}[Proof of Theorem \ref{thm_generaloverlapcoloring}] $\mbox{}$
\begin{enumerate}
    \item We construct a collection of taxon sets $\sS$ consisting of the following $c$-tuples: We select all $v \in V\left( K_{c+1}^n\right)$ such that $v$ contains taxon $n$ as well as one of the $(c-1)$-tuples of $\{1,\ldots,n-1\}$. This set $\sS$ of vertices contains $\binom{n-1}{c-1}$ many different $c$-tuples, all of which contain taxon $n$. We now show that $\sS$ is fixing taxon $c$-traceable. 
    
    The main idea of this proof is that taxon $n$ will act as a fixing taxon for all $c$-tuples not containing $n$, i.e., for all cross $c$-tuples.
    
    In order to see this, let $u=(u_1,\ldots,u_{c}) \in V\left(K_{c+1}^n\right)$. Then, if $u\in \sS$, there is nothing to show. So let us consider the case $u \not\in \sS$. By definition of $\sS$, we have $n \not\in u$. We restrict $K_{c+1}^n$ to taxon set $\{u_1,\ldots,u_c,n\}$, i.e., we consider $G:={K_{c+1}^n}\vert_{\{u_1,\ldots,u_c,n\}}$. Clearly, $u \in G$. Moreover, by definition $u$ is the only vertex in $G$ which does not contain $n$. Thus, as all vertices of $G$ except for $u$ are contained in $\sS$ (as they all contain $n$). Thus, we have found $c$ neighbors of $u$, each containing $c-1$ elements of $u$ and $n$. This shows that $n$ is a fixing taxon for $u$, so $u$ gets resolved by Proposition \ref{fixpointresolution}.
    As $u$ was arbitrary, this shows that \emph{all} cross $c$-tuples of $\sS$ can get resolved, which completes the proof of the first statement of Theorem \ref{thm_generaloverlapcoloring}.
\item
We prove the second statement of Theorem \ref{thm_generaloverlapcoloring} for all pairs $(c,n) \in \mathbb{N}^2$ with $n\geq c \geq 1$ by double induction. Our proof strategy is as follows: First, we prove the statement for the case $(1,n)$, i.e., we prove that for $c=1$ and all $n\geq 1$, the statement holds. Then, in the inductive step $c\rightarrow c+1$, we perform an induction on $n$, starting with the base case $n=c+1$ and then going, in the inductive step, from $n$ to $n+1$. This final step is the most complex, but also the most beautiful one, as we exploit the structure of $K_{c+1}^{n+1}$ by showing that it can be subdivided into two induced subgraphs, one of which is isomorphic to $K_{c}^{n}$ and one of which is isomorphic to $K_{c+1}^{n}$. For these two subgraphs, we can use the respective inductive assumptions and simply add their values to get the desired lower bound for $K_{c+1}^{n+1}$. 

We now elaborate on the details of the proof. 
\begin{itemize}
\item $(1,n)$: For $c=1$, we have $\binom{n-1}{c-1}=\binom{n-1}{0}=1$, so we need to show that one single $c$-tuple (in this case a set containing only one taxon, as $c=1$) in $\sS$ suffices to color $K_2^n$ white, i.e., to make $\sS$ fixing taxon 1-traceable. However, note that we consider the 0-overlap graph $K_2^n$, which by Lemma \ref{lem_Knc_connected} is connected, and that fixing taxon 1-traceability is concerned with $K_2^n$'s  decomposition into $K_2$-graphs, which are simply single edges. As the graph is connected, this implies that if we color one vertex white, this will color all its neighboring vertices white, which in turn will color all their neighbors white and so forth. Ultimately, this allows us (by Algorithm \ref{alg_generalFTT}) to color the last vertex of $K_2$ white, too. This proves that indeed, if $c=1$ and if $n\geq c$, $\binom{n-1}{c-1}=\binom{n-1}{0}=1$ white vertex suffices to color $K_2^n$ white, i.e., one 1-tuple suffices to make a set $\sS$ fixing taxon 1-traceable. 
     
    \item $(c,n)\rightarrow (c+1,n)$: We now want to show that if the second statement of the theorem holds for the pair $(c,n)$ and if $n\geq c+1$, then it also holds for the pair $(c+1,n)$. We prove this by an induction on $n$, starting with the base case $n=c+1$. So in the following, we assume the statement is already shown to be true for $c$ and all values of $n\geq c$ (we call this \textbf{assumption $\bm{\mathcal{A}}$}), and we now want to show it for $c+1$ and all values of $n\geq c+1$. 
    \begin{itemize}
        \item $(c+1,n=c+1)$: We consider the case of $(c+1)$-tuples and $n=c+1$. Note that when $n=c+1$, i.e., the number of taxa equals the tuple size, the $c$-overlap graph $K_{c+2}^n=K_{c+2}^{c+1}$ contains only one vertex, which corresponds to the $(c+1)$-tuple $\{1,\ldots,c+1\}$. If this is a cross $(c+1)$-tuple, i.e., if this only available tuple is not contained in $\sS$, $\sS$ is obviously not fixing taxon $(c+1)$-traceable, because there are no fixing taxa for this cross tuple. Thus, this tuple must be contained in $\sS$, leading to:

        $$ \lvert \sS \rvert \geq 1 = \binom{c}{c}= \binom{n-1}{(c+1)-1}, $$
        which completes the proof for the case $(c+1,n=c+1)$.
       
            \item $(c+1,n)\rightarrow (c+1,n+1)$: Last, we assume that the statement is already proven to be true for the pair $(c+1,n)$ (and we call this \textbf{assumption} $\bm{\mathcal{B}}$) and want to prove that it then also holds for the pair $(c+1,n+1)$. The highlevel idea of this part of the proof is to exploit the structure of $K_{c+2}^{n+1}$ and decompose it into two subgraphs, one of which is isomorphic to $K_{c+2}^{n}$ (which will allow us to use assumption $\mathcal{A}$) and one of which is isomorphic to $K_{c+1}^{n}$ (which will allow us to use assumption $\mathcal{B}$). Basically, this will give us two binomial coefficients whose sum equals the desired term for $K_{c+2}^{n+1}$.
        
        To see this, note that if we consider $K_{c+2}^{n+1}$, it contains $\binom{n+1}{c+1}$ vertices corresponding to all $(c+1)$-tuples that can be formed from taxa in $\{1,\ldots,n+1\}$, and $\binom{n}{c+1}$ of them do \emph{not} contain taxon $n+1$, whereas $\binom{n}{c}$ of them do contain taxon $n+1$. We now analyze these two sets of vertices separately.
        \begin{itemize}
            \item Let us denote the subgraph of $K_{c+2}^{n+1}$ induced by the $\binom{n}{c+1}$ vertices which do \emph{not} contain taxon $n+1$ by $G_1$. Note that $G_1\cong K_{c+2}^{n}$ (cf. Figure \ref{structureKnc}). So in fact, by assumption $\mathcal{B}$, we need at least $\binom{n-1}{c}$ white vertices  in $G_1$ to color all of them white. Clearly, in order to color all vertices of $K_{c+2}^{n+1}$ white, its subgraph $G_1$ needs to be colored white, so this gives at least $\binom{n-1}{c}$ vertices, all of which consist of $(c+1)$-tuples \emph{not} containing taxon $n+1$.

            \item Now, if all vertices of $K_{c+2}^{n+1}$ which do \emph{not} contain taxon $n+1$ are colored white, those vertices which \emph{do} contain taxon $n+1$, of which there are $\binom{n}{c}$ many, still need to be colored white.

            We denote by $G_2$ the subgraph of $K_{c+2}^{n+1}$ induced by these vertices (cf. Figure \ref{structureKnc}). Let $v$ be a vertex in $G_2$. We now define a map $f:G_2 \rightarrow K_{c+1}^n$ as follows: $f(v):=v \setminus \{n+1\}$. Clearly, $f$ is a bijection (with $f^{-1}=v \cup \{n+1\}$). We now argue that $f$ is even a graph isomorphism. Let $e=\{u,v\}$ be an edge in $G_2$ and thus also in $K_{c+2}^{n+1}$ (as $G_2$ is an induced subgraph). Note that by definition of $K_{c+2}^{n+1}$, $u$ and $v$ correspond to $(c+1)$-tuples, and as they are connected by edge $e=\{u,v\}$, these tuples must overlap in precisely $c$ taxa, one of which is taxon $n+1$ (as $u$ and $v$ are vertices in $G_2$). So $f(u)$ and $f(v)$ are $c$-tuples that overlap in precisely $c-1$ taxa. This implies that $f(u)$ and $f(v)$ must be connected by an edge in $K_{c+1}^{n}$, so the edge $\left\{f(u),f(v)\right\}$ is contained in said graph. 
            
            On the other hand, if two vertices $a$ and $b$ are connected by an edge $e'$ in $K_{c+1}^{n}$, this means the corresponding $c$-tuples overlap in $c-1$ taxa, which means that $f^{-1}(a)=a \cup \{n+1\}$ and $f^{-1}(b)=b \cup \{n+1\}$ are $(c+1)$-tuples which overlap in $c$ positions (because they also overlap in taxon $n+1$). Thus, if there is an edge $e'=\{a,b\}$ in $K_{c+1}^{n}$, there also is an edge $\left\{f^{-1}(a),f^{-1}(b)\right\}$ in $K_{c+2}^{n+1}$ and thus (as $f^{-1}(a)$ and $f^{-1}(b)$ are both vertices in $G_2$ as they both contain taxon $n+1$) also in $G_2$. Therefore, $f$ is indeed a graph isomorphism.
            
            This in turn shows that $G_2$ is isomorphic to $K_{c+1}^n$, which implies that we already know by assumption $\mathcal{A}$ that we need at least $\binom{n-1}{c-1}$ white vertices in $G_2$ so that Algorithm \ref{alg_generalFTT} will turn all vertices of $G_2$ white. 
        \end{itemize}
        As $G_1$ and $G_2$ have no vertices in common, the number of vertices needed to be colored white in order to turn all vertices of $K_{c+2}^{n+1}$ white is simply the sum of the white vertices needed for $G_1$ and $G_2$, respectively. This immediately implies for the number $\widetilde{k}$ of $(c+1)$-tuples required for a set $\sS$ of such tuples to be fixing taxon $(c+1)$-traceable that
        
        $$\widetilde{k} \geq \underbrace{\binom{n-1}{c}}_{\mbox{due to $G_1$}} + \underbrace{\binom{n-1}{c-1}}_{\mbox{due to $G_2$}}=\binom{n}{c}=\binom{(n+1)-1}{(c+1)-1}.$$ Here, the first equality is a well-known identity from combinatorics, which is due to the tetrahedral numbers, and the second equality is the desired statement in the inductive step. This completes the proof.
    \end{itemize}
\end{itemize}

\end{enumerate}

\end{proof}

Theorem \ref{thm_generaloverlapcoloring} has important implications on fixing taxon traceability in the biologically important quadruple case, as we will state in the following corollary. The first part of this corollary gives us the desired lower bound for fixing taxon traceability in the quadruple setting.

\begin{corollary} \label{cor_lowerbound} Let $n \in \mathbb{N}_{\geq 4}$ and $X=\{1,\ldots,n\}$. Then, we have:
\begin{enumerate}
\item Let $\sS$ be a collection of taxon sets of $X$ containing $k$ quadruples. Then, if $\sS$ is fixing taxon traceable, $\sS$ induces at least $ \binom{n-1}{3}$ quadruples.
\item Let $k \geq \binom{n-1}{3}$. Then, there exists a fixing taxon traceable set $\sS$ of taxon sets inducing precisely $k$ quadruples. 
\end{enumerate}
\end{corollary}

\begin{proof} \mbox{}
\begin{enumerate} \item This is a direct consequence of the second part of Theorem \ref{thm_generaloverlapcoloring}, which can be derived by setting $c=4$. 
\item Using the first part of Theorem \ref{thm_generaloverlapcoloring}, and the construction given in the proof thereof, we easily see that the set $\sS$ consisting of all $\binom{n-1}{3}$ quadruples that contain taxon $n$ is fixing taxon traceable (with fixing taxon $n$ for all cross quadruples). Clearly, each set $\sS'$ with $\sS \subseteq \sS'$ is then also fixing taxon traceable, which proves the assertion.
\end{enumerate}
\end{proof}

Before we conclude this section, we digress from the biologically relevant case of $c=4$ to have a glimpse at the rooted tree setting, or, in other words, the case $c=3$.

\begin{remark} Consider $n\geq c=3$ and all $\binom{n}{c}=\binom{n}{3}$ possible $3$-tuples (triples) on $n$ taxa. Then, Theorem \ref{thm_generaloverlapcoloring} shows that in order for a collection of taxon sets $\sS$ to be fixing taxon 3-traceable, we require $\sS$ to contain at least $\binom{n-1}{c-1}=\binom{n-1}{2}<\binom{n}{3}$ many triples, and that there are indeed such sets $\sS$ with fewer than $\binom{n}{3}$ triples that are fixing taxon 3-traceable. As an example, consider $\sS$ as given in Example \ref{ex_triple}. 

However, this shows that interestingly, Theorem \ref{thm_FTTmain} does \emph{not} hold when rooted trees are considered instead of unrooted trees: In the rooted setting, we can obviously have fixing taxon 3-traceability without phylogenetic decisiveness (as it was shown by \cite[Theorem 2]{parvini2022} that in the rooted case, phylogenetic decisiveness requires the presence of \emph{all} $\binom{n}{3}$ possible triples). In particular, this implies that the fundamental Theorem \ref{thm_FTTmain}, which deals with the biologically relevant case of $c=4$ and unrooted trees, does not generelly hold for all values of $c$, as it already fails for $c=3$. \end{remark}

\begin{example}\label{ex_triple} Consider $X=\{1,2,3,4\}$ and the set $\sS=\{\{1,2,4\}, \{1,3,4\}$, $\{2,3,4\}\}$. So of the four triples possible for $X$, one is not present in $\sS$, namely $\{1,2,3\}$. So $\sS$ is \emph{not} phylogenetically decisive in the rooted sense by \cite[Theorem 2]{parvini2022}. However, $\sS$ is fixing taxon 3-traceable, because the four triples $\{1,2,4\}$, $\{1,3,4\}$, $\{2,3,4\}$ and $\{1,2,3\}$ form $K_4^4$, which is isomorphic to $K_4$. So only one vertex in this $K_4$ is not in $\sS$ and thus \emph{not} white in the beginning of Algorithm \ref{alg_generalFTT}. Thus, the missing triple $\{1,2,3\}$ will also get colored white in line 24 of the algorithm. So $\sS$ is fixing taxon 3-traceable, but it is not phylogenetically decisive in the rooted sense. 
\end{example}

So as the above example shows, in the rooted setting, fixing taxon 3-traceability does not help concerning the decision if a collection of taxon sets $\sS$ is phylogenetically decisive, but fortunately this decision consists only of checking whether all possible triples are present in $\sS$ \cite[Theorem 2]{parvini2022}, which can be done in polynomial time, anyway.

Next, we turn our attention to a new bound on the number of quadruples needed in order for a set $\sS$ of taxon sets to be phylogenetically decisive.

\subsection{On the number of quadruples needed for phylogenetic decisiveness}\label{sec_PDbounds}

We have seen in the previous section (Corollary \ref{cor_lowerbound}) that if a set $\sS$ of taxon sets contains fewer than $\binom{n-1}{3}$ quadruples, $\sS$ cannot be fixing taxon traceable. However, such a set $\sS$ might still be phylogenetically decisive as we have seen in Example \ref{badexample}: There, we have $n=6$, and $\sS$ contains $9<\binom{5}{3}=10$ quadruples, and $\sS$ fulfills the four-way partition property. This is a contradiction to  \cite[Theorem 4(i)]{parvini2022}, in which the authors state that all phylogenetically decisive collections of taxon sets contain at least $\binom{n-1}{3}$ quadruples.

It is therefore the main aim of the present section to present a corrected lower bound for the number of quadruples needed to make $\sS$ phylogenetically decisive. We do this with the following theorem.

\begin{theorem}\label{thm_lowerboundphylodec}
Let $n \in \mathbb{N}_{\geq 6}$. Let $\widehat{k}(n)$ denote the minimal number of quadruples present in any phylogenetically decisive collection of taxon sets on $X=\{1,\ldots,n\}$, i.e., there is no set $\mathcal{S}$ of subsets of $X$ that contains fewer than $\widehat{k}(n)$ quadruples and is phylogenetically decisive. Then, we have: 

 $$ \left\lceil\frac{1}{4}\ \binom{n}{3} \right\rceil +2\leq \widehat{k}(n) \leq \binom{n-1}{3}-1.$$
\end{theorem}

\begin{proof} 
Concerning the first inequality, note that in order for $\sS$ to be phylogenetically decisive, all partitions of the kind $a|b|c|X\setminus\{a,b,c\}$ need to be covered. As already observed in \cite{sanderson_steel_2010}, this implies that all triples $\{a,b,c\}$ must be a subset of at least one quadruple of $\sS$. Let $\sS$ be a phylogenetically decisive collection of taxon sets on $X=\{1,\ldots,n\}$ containing $k$ quadruples, and note that every quadruple contains four triples. This implies that $\sS$ contains $4k$ triples (possible duplicates included). However, there are $\binom{n}{3}$ possible triples, which all need to occur at least once. This immediately leads to: $4k \geq \binom{n}{3}$, and thus $k \geq \frac{1}{4}\ \binom{n}{3}. $ 

However, if $k=\frac{1}{4}\ \binom{n}{3}$, i.e., if every triple is covered precisely once\footnote{Note that in \cite{hanani1960} a necessary and sufficient condition on $n$ can be found for the existence of quadruple sets that cover each possible triple precisely once, namely if $n \equiv 2 \mbox{ or } 4 \ (\hspace{-0.2cm}\mod 6).$}, as $n\geq 6$, $\sS$ cannot be decisive. To see this, assume the quadruple $\{a,b,c,d\}$ is the only quadruple to cover both triples $\{a,b,c\}$ and $\{a,b,d\}$. Then, the four-way partition property is violated as the partition $X_1=\{a\}$, $X_2=\{b\}$, $X_3=\{c,d\}$ and $X_4=X \setminus \{a,b,c,d\}$ is not covered (cf. Theorem \ref{sanderson_steel}). Thus, we require $k\geq \frac{1}{4}\ \binom{n}{3}+1$. But this is still not sufficient. This is due to the fact that if only one of the mentioned triples, say $\{a,b,c\}$ appears in a second quadruple, say $\{a,b,c,e\}$, then the four-way partition $X_1=\{a\}$, $X_2=\{b\}$, $X_3=\{c,d,e\}$ and $X_4=X \setminus \{a,b,c,d,e\}$ is not covered (note that as $n \geq 6$, $X_4$ is not empty). Therefore, there must be an additional quadruple duplicating at least one more of the already present triples. This leads to $k\geq \frac{1}{4}\ \binom{n}{3}+2$.

Using the fact that $k$ needs to be an integer adds the required ceiling function, so that we get $k \geq \left\lceil\frac{1}{4}\ \binom{n}{3} +2 \right\rceil = \left\lceil\frac{1}{4}\ \binom{n}{3} \right\rceil+2.$
This shows that the first inequality is true for any number $k$ of quadruples in a phylogenetically decisive collection of taxon sets, so it must in particular be true for $k=\widehat{k}(n)$. This completes the proof of the first inequality.

For the second inequality, we prove a slightly different statement which will lead to the desired result: Exploiting the properties of fixing taxa, we inductively prove that Algorithm \ref{alg_phylodecupperboundforminset} returns a phylogenetically decisive set $\sS$ of taxon sets for all $n\geq 6$, and we will subsequently show that this set $\sS$ contains $\binom{n-1}{3}-1$ quadruples.

{\small
\begin{algorithm}[H]
\caption{Construction of a phylogenetically decisive set $\sS$ of taxon sets (subsets of $X=\{1,\ldots,n\}$) with $n\geq 6$ taxa and  with $\binom{n-1}{3}-1$ quadruples}\label{alg_phylodecupperboundforminset}
\LinesNumbered
 \SetKwInOut{Input}{Input}\SetKwInOut{Output}{Output}
 \vspace{0.15 cm}
 \Input{$n$: $n\in \mathbb{N}_{\geq 6}$ }
 \Output{phylogenetically decisive set $\sS$ consisting of $\binom{n-1}{3}-1$ quadruples}
 \Init{}{
 $\sS \gets \left\{\{1,2,3,5\},\{1,2,4,5\},\{1,2,4,6\},\right.$  $\left.\{1,3,4,6\}, \{1,3,5,6\}, \{1,4,5,6\}, \{2,3,4,5\}, \{2,3,4,6\}, \{2,3,5,6\}\right\}$\\ 
 $counter \gets 6$
 }

\While{$ counter < n$}{

$counter \gets counter +1$
\\
Append to $\sS$ all $\binom{counter-2}{2}$ quadruples containing the pair $(1,counter)$ 
}

\Return{$\sS$}
\end{algorithm}
}

For $n=6$, the algorithm does not do anything else than return the result provided by the initialization step, which is the set $\sS$ we have already seen in Example \ref{badexample}. Here, we have $X=\{1,2,3,4,5,6\}$ and $\sS=\left\{\{1,2,3,5\},\{1,2,4,5\},\right.$  $\left.\{1,2,4,6\},\{1,3,4,6\}, \{1,3,5,6\}, \{1,4,5,6\}, \{2,3,4,5\}\right.$,\\ $\left. \{2,3,4,6\}, \{2,3,5,6\}\right\}$, and we have already seen that $\sS$ is phylogenetically decisive (cf. Table \ref{tab_jannesEx} in the Appendix). As $\sS$ contains 9 quadruples and as $\binom{6-1}{3}-1=9$, this proves the base case of the induction.

Next, assume the algorithm works for all natural numbers up to $n-1$. We now consider $n$. The algorithm takes the (by the inductive hypothesis: phylogenetically decisive) set constructed for $n-1$, say $\sS'$, increases the counter from $n-1$ to $n$ and then adds the $\binom{counter-2}{2}=\binom{n-2}{2}$ quadruples that contain the pair $(1,n)$ to form $\sS$. Note that while $\sS'$ does not contain any quadruples that contain taxon $n$, it is phylogenetically decisive for $X'=\{1,\ldots,n-1\}$ by the inductive hypothesis. So the only quadruples that can potentially not get resolved, i.e., that might \emph{not} be uniquely resolved by all possible supertrees, all must contain taxon $n$ (and cannot contain taxon 1). So they must be of the form $\{a,b,c,n\}$, where $a,b,c \in \{2,\ldots,n-1\}$. However, as we have \emph{all} possible quadruples that contain both taxon $n$ and taxon 1, taxon 1 acts as a fixing taxon for all quadruples containing taxon $n$. This is the high-level idea of the proof, and we now elaborate on the details. 

Consider the quadruple $\{a,b,c,n\}$, where $a,b,c \in \{2,\ldots,n-1\}$. As we have the quadruples $\{1,a,b,n\}$, $\{1,a,c,n\}$, $\{1,b,c,n\}$ in $\sS$ (all of them were added in the last \textsf{while}-loop of Algorithm \ref{alg_phylodecupperboundforminset}), and as we know that the quadruple $\{1,a,b,c\}$ gets uniquely resolved by all possible supertrees (because this quadruple was already either present in $\sS'$ or resolved by the quadruples in this set as $\sS'$ is phylogenetically decisive by the inductive hypothesis), taxon 1 is a fixing taxon for the quadruple $\{a,b,c,n\}$. By Proposition \ref{fixpointresolution}, this implies that the quadruple $\{a,b,c,n\}$ gets resolved in a unique way, too. In particular, this shows that $\sS$ cannot induce any CQs that cannot get resolved, which proves the phylogenetic decisiveness of $\sS$ as produced by Algorithm \ref{alg_phylodecupperboundforminset}. 

It only remains to show that $\sS$ has the desired size. In this regard, note that as in each step $\binom{counter-2}{2}$ many quadruples are added for all values of $counter =7,\ldots, n$, clearly the set $\sS$ returned by the algorithm contains $9+ \sum\limits_{i=7}^n \binom{i-2}{2}$ many quadruples. We now rearrange this term as follows:

\begin{align*}
   9+ \sum\limits_{i=7}^n \binom{i-2}{2} &= 9+\underbrace{\sum\limits_{i=2}^{n} \binom{i-2}{2}}_{\overset{(*)}{=}\binom{n-1}{3}}- \underbrace{\sum\limits_{i=2}^6 \binom{i-2}{2}}_{=10} 
   = \binom{n-1}{3}-1.
\end{align*}

Here, equation (*) holds due to the well-known identity $\binom{n+2}{3}=\sum\limits_{i=1}^n \binom{i+1}{2}$ given by the tetrahedral numbers. 

So clearly, the set $\sS$ generated by Algorithm \ref{alg_phylodecupperboundforminset} contains the desired number of quadruples and is phylogenetically decisive. This completes the proof.
\end{proof}

We end this section by pointing out that the lower bound $\left\lceil \frac{1}{4} \binom{n}{3}\right\rceil +2$ for $k$ is not generally tight. For instance, for $n=6$, the bound is  $\left\lceil \frac{1}{4} \binom{6}{3}\right\rceil +2 =7$, whereas we could confirm by exhaustive search that $\widehat{k}(6)=9$ (cf. Table \ref{tab:boundslist}), which equals the upper bound. 

\begin{remark} Note that the lower bound induced by Theorem \ref{thm_lowerboundphylodec} is already a huge improvement to another natural bound. It can easily be seen that $\widehat{k}(n) \geq \left\lceil\frac{S_2(n,4)}{4^{n-4}}\right\rceil$, where $S_2(n,4)$ denotes the Stirling number of the second kind of $n$ and $4$, which in turn denotes the number of partitions of a set of size $n$ into four non-empty and non-overlapping subsets. This inequality is due to the fact that each quadruple $\{a,b,c,d\}$ in a collection of taxon sets $S$ on $X=\{1,\ldots,n\}$ covers precisely $4^{n-4}$ partitions: Starting with separating $a$, $b$, $c$ and $d$ into separate subsets of $X$, all other $n-4$ taxa can choose freely amongst the 4 subsets. If $S$ then contains $k$ quadruples, $S$ covers at most $k \cdot 4^{n-4}$ partitions (note that the number might be smaller as some quadruples in $S$ might cover the same partition). But in order for $S$ to be phylogenetically decisive, by Theorem \ref{sanderson_steel}, \emph{all} partitions of $X$ into 4 non-empty and non-overlapping subsets must be covered, and there are $S_2(n,4)$ many such partitions. So we know that $k \cdot 4^{n-4} \geq S_2(n,4)$, which implies $k \geq \frac{S_2(n,4)}{4^{n-4}}$. Using $k \in \mathbb{N}$ then allows us to add the required ceiling function. 
However, using the well-known identity $S_2(n,4)= \frac{1}{4!}\sum\limits_{j=0}^4 (-1)^{4-j}\binom{4}{j}j^n$, one can easily show that the rather intuitive lower bound based on the Stirling numbers of the second kind is unfortunately generally smaller than the one stated in Theorem \ref{thm_lowerboundphylodec}, which is why this does not improve the bound further.
\end{remark}

Note that the upper bound given by Theorem \ref{thm_lowerboundphylodec} is not generally tight, either. While it is tight for $n=6$, it is not for $n=7$: In our random search we have found sets of 17 quadruples which are phylogenetically decisive (cf. Section \ref{sec_sim}), whereas the upper bound suggests a value of $\binom{7-1}{3}-1=19$ in this case.

\subsection{Differences between phylogenetic decisiveness and fixing taxon traceability}\label{sec_diff}

As we have seen in the previous sections, fixing taxon tracability and phylogenetic decisiveness are closely related. In fact, by Theorem \ref{thm_FTTmain}, we know that every set $\sS$ of taxon sets that is fixing taxon traceable is also phylogenetically decisive. However, Example \ref{badexample} has shown that there are collections of taxon sets that are phylogenetically decisive, but not fixing taxon traceable. In fact, Algorithm \ref{alg_phylodecupperboundforminset} systematically  constructs such sets for all $n\geq 6$ as by Corollary \ref{cor_lowerbound}, the constructed sets contain too few quadruples to be fixing taxon traceable, but by the proof of Theorem \ref{thm_lowerboundphylodec} they are phylogenetically decisive. In this section, we thus want to elaborate on how different the two concepts of fixing taxon traceability and phylogenetic decisiveness really are.

\subsubsection{\texorpdfstring{Theoretical considerations on the number of quadruples in collections of taxon sets that are phylogenetically decisive but not fixing taxon traceable}{Theoretical considerations on the number of quadruples in collections of taxon sets that are phylogenetically decisive but not fixing taxon traceable}}

In this section we derive some bounds on the number of quadruples a set $\sS$ of taxon sets can have in order to make the concepts of phylogenetic decisiveness and fixing taxon traceability either differ or coincide. We start with the following remark.

\begin{remark} If the number of taxa $n$ equals 4 or 5, the concepts of fixing taxon traceability and phylogenetic decisiveness coincide. In particular, if $n=4$, then there is only one quadruple. This quadruple has to be contained in $\sS$ in order to make $\sS$ fixing taxon traceable and phylogenetically decisive. If this quadruple is not present, i.e., if $\sS=\emptyset$, then $\sS$ is neither fixing taxon traceable nor phylogenetically decisive. Similarly, for $n=5$, it can easily be seen that any four out of the $\binom{5}{4}=5$ possible quadruples need to be present in $\sS$ for $\sS$ to be phylogenetically decisive and fixing taxon traceable. With fewer than four quadruples, $\sS$ does not fulfull either one of these properties.
\end{remark}

Next, we state the following corollary on cases when the two concepts of fixing taxon traceability and phylogenetic decisiveness are guaranteed to coincide.

\begin{corollary}\label{cor_sizeskforwhichFTT=PhyloDec}
Let $X=\{1,\ldots,n\}$ be a collection of taxon sets and let $\sS$ be a set of subsets of $X$ containing $k$ quadruples. Then, if $k< \left\lceil \frac{1}{4}\binom{n}{3}  \right\rceil+2$, $\sS$ is neither phylogenetically decisive nor fixing taxon traceable. Moreover, if $k\geq\binom{n}{4}-n+4$, then $\sS$ is both phylogenetically decisive and fixing taxon traceable. In particular, in both cases, fixing taxon traceability and phylogenetic decisiveness are equivalent. 
\end{corollary}

\begin{proof} If $k< \left\lceil \frac{1}{4}\binom{n}{3}  \right\rceil+2$, by Theorem \ref{thm_lowerboundphylodec}, $\sS$ is \emph{not} phylogenetically decisive. By Theorem \ref{thm_FTTmain}, this implies that $\sS$ is also not fixing taxon traceable. This shows that when $k< \left\lceil \frac{1}{4}\binom{n}{3}  \right\rceil+2$, the two concepts coincide. 

Now, if $k\geq\binom{n}{4}-n+4$, by Theorem \ref{thm_upperbound}, $\sS$ is fixing taxon traceable and thus, again by Theorem \ref{thm_FTTmain}, $\sS$ is also phylogenetically decisive. This shows that when $k\geq\binom{n}{4}-n+4$, the two concepts coincide, which completes the proof.
\end{proof}

 Note that Corollary \ref{cor_sizeskforwhichFTT=PhyloDec} is a generalization of Theorem \ref{thm_moan}, and the corollary's proof, by using the new concept of fixing taxon traceability, gives an alternative proof to said theorem. However,  Corollary \ref{cor_sizeskforwhichFTT=PhyloDec} shows that for the concepts of fixing taxon traceability and phylogenetic decisiveness to differ, we need the number $k$ of quadruples in a set $\sS$ of taxon sets to be contained in the interval $\left[\left\lceil \frac{1}{4}\binom{n}{3}  \right\rceil+2,\binom{n}{4}-n+3\right]$. However, it is the aim of the remainder of this section to analyze the upper bound of this interval a bit more in-depth. In fact, we believe that this upper bound is not tight, cf. Conjecture \ref{con_upperbound} at the end of this section, i.e., we think that the number of quadruples in \emph{all} collections of taxon sets that are phylogenetically decisive without being fixing taxon traceable is smaller than suggested by this bound.

But how large can $k$ be for $\sS$ to be phylogenetically decisive without also being fixing taxon traceable? The following theorem states a lower bound for the maximum value of $k$.

\begin{theorem}\label{thm_upperboundFTTneqPhyloDec} Let $n \in \mathbb{N}_{\geq 6}$ and let $X=\{1,\ldots,n\}$ be a collection of taxon sets. Then, there is a set $\sS$ of subsets of $X$ containing $k= \binom{n}{4}-3n+13$ quadruples such that $\sS$ is phylogenetically decisive, but not fixing taxon traceable, and that $\sS$ is maximal with these properties. In particular, no quadruple can be added to $\sS$ without making it fixing taxon traceable.
\end{theorem}

\begin{proof}
We prove the assertion by proving the correctness of Algorithm \ref{alg_phylodecNonFTT} by induction on $n$. For $n=6$, consider the set $\sS$ with $10=\binom{6}{4}-5=\binom{6}{4}-3\cdot 6+13$ quadruples provided by the initialization step of Algorithm \ref{alg_phylodecNonFTT}. By exhaustively considering all 65 partitions of $X=\{1,\ldots,6\}$, one can easily verify that $\sS$ is phylogenetically decisive; cf. Table \ref{tab_maxDecNonFTT} in the Appendix.

{\small
\begin{algorithm}[H]
\caption{Construction of a phylogenetically decisive, but non-FTT set $\sS$ of taxon sets on $n\geq 6$ taxa with $\binom{n}{4}-3n+13$ quadruples}\label{alg_phylodecNonFTT}
\LinesNumbered
 \SetKwInOut{Input}{Input}\SetKwInOut{Output}{Output}
 \vspace{0.15 cm}
 \Input{$n$: $n\in \mathbb{N}_{\geq 6}$ }
 \Output{phylogenetically decisive and non-fixing taxon traceable set $\sS$ consisting of $\binom{n}{4}-3n+13$ quadruples}
 \Init{}{
 $\sS \gets \binom{X}{4} \setminus  \left\{\{1,2,5,6\},\{1,3,4,6\},\{1,4,5,6\},\{2,3,4,6\},\{2,3,5,6\}\right\}$\\ 
 $counter \gets 6$
 \\ 
 $newGrays\gets \emptyset$ \\
  $newWhites\gets \emptyset$ \\
  $X_c \gets \emptyset$\\
  $\widetilde{X}\gets\emptyset$
 }

\While{$n-counter\geq 1$}{
$counter \gets counter +1$\\
$X_{c} \gets \{1,\ldots,counter-1\} $ \\
$\widetilde{X} \gets \{ \{a,b,c,counter\} \ | \ a,b,c \in X_{c}\}$ \\
$newGrays\gets \{\{2,5,6,counter\},\{3,4,6,counter\},\{4,5,6,counter\}\}$\\
$newWhites \gets \widetilde{X} \setminus newGrays$\\
$\sS \gets \sS \cup newWhites$  
}

\Return{$\sS$}
\end{algorithm}
}

Moreover, $\sS$ is not fixing taxon traceable. In order to see that, consider the 3-overlap-graph $K_4^6$ depicted in Figure \ref{fig_phylodecNonFTT_basecase}. As the cross quadruples form a cycle in that graph, this shows they all have two CQ neighbors. Thus, for each cross quadruple, none of the two possible candidates for fixing taxa actually works. So $\sS$ is not fixing taxon traceable. This completes the base case of the induction.

 \begin{figure}[ht]      \centering\vspace{0.5cm} 
    \includegraphics[width=10cm]{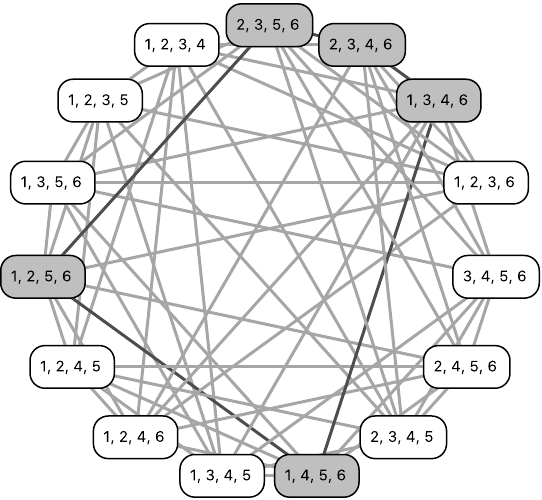} 
    \caption{The 3-overlap graph $K_4^6$ as induced by the initialization phase of Algorithm \ref{alg_phylodecNonFTT}: The cross quadruples form a cycle. }\label{fig_phylodecNonFTT_basecase}
  \end{figure}

Now for the inductive step, assume the algorithm is already known to produce a phylogenetically decisive set $\sS$ of taxon sets that is not fixing taxon traceable for up to $n-1$ taxa, and now consider $X=\{1,\ldots,n\}$. In the \textsf{while}-loop, the algorithm takes the set produced for $n-1$ and adds all but three possible quadruples containing taxon $n$ to it. The three quadruples not added to $\sS$ are $\{2,5,6,n\}$, $\{3,4,6,n\}$ and $\{4,5,6,n\}$. 

We first verify that $\sS$ produced as described has the correct size. Thus, note that there are $\binom{n-1}{3}$ many quadruples containing taxon $n$, but three of them do not get added. In summary, the number $k$ of quadruples in $\sS$ can be calculated as follows:

\begin{align*}k&= \underbrace{\binom{n-1}{4}-3(n-1)+13}_{\mbox{by induction}} + \underbrace{\binom{n-1}{3}-3}_{\mbox{added in \textsf{while}-loop}}\\
&= \binom{n}{4}-3n+13,\end{align*}
which shows that $\sS$ has the claimed size.

Now we need to show that $\sS$ also has the proclaimed properties. We start by showing that $\sS$ is phylogenetically decisive. As the set was decisive for $n-1$ before the new quadruples got added in the \textsf{while}-loop, all 4-partitions of $X=\{1,\ldots,n\}$ in which taxon $n$ is \emph{not} forming a set of its own need \emph{not} be considered (they are already covered by quadruples not containing taxon $n$). So we only need to consider partitions of the kind $X_1|X_2|X_3| \{n\}$. However, these are necessarily all covered, as taxon 1 is contained in one of the sets $X_i$, $i \in \{1,2,3\}$, and \emph{all} quadruples containing both taxon 1 and taxon $n$ are contained in $\sS$, as $\{2,5,6,n\}$, $\{3,4,6,n\}$ and $\{4,5,6,n\}$ are the only quadruples containing $n$ that were not added to $\sS$, but these do not contain taxon 1. So if taxon 1 is contained in, say, $X_1$, we choose $x_2 \in X_2$ and $x_3 \in X_3$ arbitrarily and consider the quadruple $\{1,x_2,x_3,n\} \in \sS$. This covers the partition in question as required, which shows that $\sS$ is phylogenetically decisive.

Next, we need to show that $\sS$ is not fixing taxon traceable. However, this can be easily seen by considering the construction of $\sS$: In each step, it is made sure that the newly added taxon does not act as a fixing taxon for any previous cross quadruple. This is due to the way the new cross quadruples are chosen in the \textsf{while}-loop. For instance, $\{2,5,6,n\}$ makes sure that $n$ cannot act as a fixing taxon for $\{1,2,5,6\}$, $\{2,3,5,6\}$ or $\{2,5,6,i\}$ with $i \in \{7,\ldots,n-1\}$. Analogously, $\{3,4,6,n\}$ makes sure that $n$ is not a fixing taxon for $\{1,3,4,6\}$, $\{2,3,4,6\}$ or $\{3,4,6,i\}$ with $i \in \{7,\ldots,n-1\}$. Last but not least, $\{4,5,6,n\}$ makes sure that $n$ is not a fixing taxon for $\{1,4,5,6\}$ or $\{4,5,6,i\}$ with $i \in \{7,\ldots,n-1\}$. This shows that Algorithm \ref{alg_phylodecNonFTT} starts with a non-fixing taxon traceable set in the initialization step and in each iteration of the \textsf{while}-loop makes sure that the original cross quadruples cannot get resolved. This shows that $\sS$ is not fixing taxon traceable. 

It remains to show that $\sS$ is maximal with the desired properties, i.e., that adding any additional quadruple to $\sS$ would lead to a set that is either not phylogenetically decisive or fixing taxon traceable. The first cannot happen (a set cannot lose decisiveness by adding more input information in the form of more quadruples), so it remains to show that adding any cross quadruple to $\sS$ would make it fixing taxon traceable. 

For $n=6$, consider again the 3-overlap-graph depicted in Figure \ref{fig_phylodecNonFTT_basecase}. Re-coloring one CQ in this graph white (by adding the corresponding cross quadruple to $\sS$) causes two remaining CQs to have only one CQ neighbor. These two CQs thus both have a fixing taxon. For instance, if $\{1,2,5,6\}$ gets added to $\sS$, $\{2,3,5,6\}$ has fixing taxon 1 and $\{1,4,5,6\}$ has fixing taxon 2. The others follow analogously. Moreover, it can easily be seen that then, all other quadruples iteratively also get resolved, which is why $\sS$ is fixing taxon traceable. So for $n=6$, maximality is shown.

For $n>6$, it is again clear that none of the original five cross quadruples can be added, because by the same argument they would then all five be resolved, so that each  one of the new cross quadruples $\{2,5,6,n\}$, $\{3,4,6,n\}$ and $\{4,5,6,n\}$ would have a fixing taxon in $\{1,\ldots,6\}$. 

However, if one of these new cross quadruples were added to $\sS$, it would make $n$ a fixing taxon for all of the original five cross quadruples from the initialization step with which it shares three taxa. This, in turn, would by the same arguments as above resolve all of these original cross quadruples, so that again all new cross quadruples would have a fixing taxon in  $\{1,\ldots,6\}$.

This shows that no cross quadruple can be added to $\sS$ without making $\sS$ fixing taxon traceable. This shows that $\sS$ is maximal and thus completes the proof.

\end{proof}

\begin{remark} The simulation results we  present in Section \ref{sec_sim} suggest that the bound of $k=\binom{n}{4}-3n+13$ implied by Theorem \ref{thm_upperboundFTTneqPhyloDec} is actually sharp. This leads to the following conjecture, which we leave for future research.\end{remark}

\begin{conjecture}\label{con_upperbound} Let $X=\{1,\ldots,n\}$ with $n \in \mathbb{N}_{\geq 6}$. Let $\sS$ be a set of subsets of $X$ inducing $k$ quadruples, such that $\sS$ is phylogenetically decisive, but not fixing taxon traceable. Then, $k \leq \binom{n}{4}-3n+13$.
\end{conjecture}

\subsubsection{Simulation approach}\label{sec_sim} 
In this subsection, we complement all theoretical considerations of the previous subsection with simulations highlighting cases in which the two concepts of phylogenetic decisiveness and fixing taxon traceability coincide or differ. More precisely, we analyze how powerful Algorithm \ref{alg_generalFTT}, which determines if a given collection of taxon sets is fixing taxon traceable, is concerning the detection of phylogenetic decisiveness: While we know from Theorem \ref{thm_FTTmain} that all fixing taxon traceable collections of taxon sets are also phylogenetically decisive, we also  already know from Theorem \ref{thm_upperboundFTTneqPhyloDec} that not converse is not true, so Algorithm \ref{alg_generalFTT} will miss some sets if used to detect decisiveness. It is the aim of this section to analyze how often this happens with randomly sampled sets of quadruples. This is of high interest as answering the question if a given collection of taxon sets is phylogenetically decisive is coNP-complete.  Algorithm \ref{alg_generalFTT} can be used as a heuristical approach to tackle this problem. In our simulation, we analyze the power of this approach.

All data was generated within \textsf{R} (v4.2)\cite{RCoreTeam2019}, and both the R package \verb+FixingTaxonTraceR+ that includes the \textsf{R} implementation of \textsf{Algorithm 1} and the simulation code are available on \textsf{github} \cite{githubsoftware,githubsimulation}. In the following, we describe our simulation approach within the ADEMP framework established in \cite{morris2019}. 

\textbf{Simulation setup}

\textbf{(A)ims}: The aim of the simulation was to quantify the number of collections of taxon sets for which the concepts of phylogenetic decisiveness and fixing taxon traceability coincide or differ, respectively. In more detail, we aimed at (1) estimating the power of  Algorithm \ref{alg_generalFTT} as a formal test for phylogenetic decisiveness and (2) identifying bounds in which Algorithm \ref{alg_generalFTT}  will always differ from decisiveness as described in the sections above.

\textbf{(D)ata-generating mechanisms}: We performed simulations for $n=6, \ldots, 10$ taxa. For each value of $n$, we considered taxon set $X=\{1,\ldots,n\}$ and randomly sampled $k_n$ quadruples from $X$ for all $k_n= \left\lceil\frac{1}{4}\ \binom{n}{3} \right\rceil +2,\ldots, \binom{n}{4} - n + 3$. Note that the possible values of $k_n$ were chosen according to the known bounds for fixing taxon traceability (cf. Theorem \ref{thm_upperbound} and Theorem \ref{thm_lowerboundphylodec}).

So for each $k_n$, we sampled $\sS_i=\{Y_1, \ldots, Y_{k_n}\}$ as random subsets of quadruples of $X$, with $i=1, \ldots, 10,000=:n_{sim}$; i.e., we repeated each such sampling 10,000 times. The quadruples were drawn uniformly at random out of all possible sets containing $k_n$ quadruples using the \textsf{R}-function  \textsf{sample()}. However, for $n=6$ and each possible value of $k_n$, there are in total fewer than 10,000 possible combinations.\footnote{To see this, note that there are $\binom{6}{4}=15$ quadruples, and we considered $k_6\in\{\left\lceil\frac{1}{4}\ \binom{6}{3} \right\rceil +2=7, ..., \binom{6}{4} - 6 + 3=12\}$. As $\binom{15}{7}=\binom{15}{8}=6,435$ and as $\binom{15}{k_6}<6,435$ if $k_6>8$, we could not draw 10,000 samples for any possible value of $k_6$. Instead, we performed exhaustive searches for all values of $k_6$.} Hence, in this case we used all possible combinations of quadruples precisely once (exhaustive search).

\textbf{(E)stimands or targets}: The target of interest was the following null hypothesis: 

$H_0$: The given set $\sS_i$ of quadruples is not phylogenetically  decisive. 

The alternative hypothesis was: 

$H_1$: The given set $\sS_i$ of quadruples is phylogenetically decisive. 

\textbf{(M)ethods}: For each simulated set of quadruples $\sS_i$, we used Algorithm \ref{alg_generalFTT} to decide if $\sS_i$ is fixing taxon traceable. To check for phylogenetic decisiveness, the four-way partition property was tested, i.e., we exhaustively checked all possible 4-partitions of $X=\{1,\ldots,n\}$ for coverage by $\sS_i$. The two results for the set $\sS_i$ were returned in a logical vector $x_i$, in which the first entry refers to fixing taxon traceability and the second one to phylogenetic decisiveness. In particular, (\textsf{T},\textsf{T}) indicates both fixing taxon traceability and phylogenetic decisiveness (denoted as true positive, as both the actual condition and test are positive), (\textsf{F},\textsf{T}) indicates only phylogenetic decisiveness (denoted as false negative, as the actual condition is positive but the test is negative), and (\textsf{F},\textsf{F}) indicates no phylogenetic decisiveness (denoted as true negative, as both the actual condition and the test are negative).\footnote{Note that in our setting (\textsf{T},\textsf{F}), i.e., fixing taxon traceability without phylogenetic decisiveness (false positive), is not possibly due to Theorem \ref{thm_FTTmain}.} The result of each simulated set was stored in the matrix $M \in \mathbb{R}^{10,000 \times 2}$, which was then used to create a contingency table for each value of $k_n$. 

\textbf{(P)erformance measures}: To assess the performance of 
Algorithm \ref{alg_generalFTT}, which was developed to detect fixing taxon traceability, as an indicator for phylogenetic decisiveness, we derived three parameters of contingency tables for all $k_n$: 

\begin{itemize} 
    \item True Positive Rate (TPR) as the proportion of fixing taxon traceable sets (true positive, TP) to phylogenetically decisive sets (actual condition positive, P): $TPR = \frac{TP}{P}$. This value is also known as \textbf{power} of a test.
    \item Negative Predictive Value (NPV) as the proportion of phylogenetically non-decisive sets (true negative, TN) to sets that are not fixing taxon traceable (predicted negative, either true or false): $NPV = \frac{TN}{TN+FN}$.
\end{itemize}

Additionally, for all analyzed values of $n$ and $k_n$, we calculated a third parameter, namely the prevalence (PREV): PREV describes the proportion of phylogenetically decisive sets (positives, $P$) in our random sample sets: $PREV = \frac{P}{P+N}$, with $N$ being all non-decisive sets. Note that the prevalence was tested using the four-way partition property, not Algorithm \ref{alg_generalFTT}. Thus, PREV contains confirmed information concerning the simulated data and is independent of estimations.

As a summary, for each $n$ we report the medians and interquartile ranges (IQR) of all three parameters over all values $k_n$. In addition, we extracted the following bounds for $k_n$: 

\begin{itemize}
    \item $\min(k_{n})$ for phylogenetic decisiveness: This value describes the minimum number $k_n$ of input quadruples for which phylogenetic decisiveness was observed in at least one case.
    \item $\min(k_{n})$ for fixing taxon traceability: This value describes the minimum number $k_n$ of input quadruples for which fixing taxon traceability was observed in at least one case.
    \item $\max(k_{n})$ for non-decisiveness: This value describes the maximum number $k_n$ of input quadruples for which phylogenetic decisiveness was \emph{not} observed in at least one case. 
    \item $\max(k_{n})$ for phylogenetic decisiveness without fixing taxon traceability: This value describes the maximum number $k_n$ of input quadruples for which phylogenetic decisiveness was observed without fixing taxon traceability in at least one case. 
    \item $\max(k'_{n})$ for decisiveness without traceability, defined as the sum of input quadruples and quadruples resolved by the algorithm: This value refers to the maximum number $k_n'$ of quadruples contained in the input set or the set resolved when Algorithm \ref{alg_generalFTT} stops and gives the largest such value we observed when we were in a case with phylogenetic decisiveness without fixing taxon traceability.\footnote{Note that even in cases in which Algrithm \ref{alg_generalFTT} fails to resolve all quadruples (as the set of input quadruples is not FTT), it often is able to resolve at least some quadruples additional to the ones in the input set. This total number $k_n'$ of quadruples resolved by the algorithm is what the parameter $\max(k'_{n})$ is referring to.
    
    }
\end{itemize}

\textbf{Simulation results}

A summary of the simulation performance is given in Table \ref{tab:ADEMPresults}, where we present the median and the interquartile ranges (IQR) of PREV, TPR and NPV over all tested values of $k_n$.

\begin{table}[ht]
    \centering
    \begin{tabular}{|c||c|c|c|}
    \hline
  $\bm{n}$ &  \textbf{TPR [IQR]} & \textbf{NPV [IQR]} & \textbf{PREV [IQR]} \\
    \hline\hline
    \bf{6} & 0.97 [0.71,1] & 1 [0.97,1] & 0.62 [0.36,0.83] \\ 
    \bf{7} &  0.99 [0.65,1] & 1 [0.98,1] & 0.74 [0.25,0.96]  \\
    \bf{8} & 1.00 [0.80,1] & 1 [0.98,1] & 0.85 [0.29,0.99] \\
    \bf{9} &  1.00 [0.91,1] & 1 [0.99,1] & 0.91 [0.32,1] \\
    \bf{10} &  1.00 [0.97,1] & 1 [1,1] & 0.95 [0.44,1] \\    
    \hline
    \end{tabular}
    \caption{Summary of simulation performance measures: median and interquantile range of the True Positive Rate (TPR), Negative Predictive Value (NPV) and Prevalence (Prev) are given. These summary measures were all obtained using $k_n \geq \min(k_{n})$ for phylogenetic decisiveness.}
    \label{tab:ADEMPresults}
\end{table}

For all analyzed values of $n$, we estimated a high power of Algorithm \ref{alg_generalFTT} to identify phylogenetic decisiveness. The power (as given by TPR) increases with $k_n$, as more quadruples increase the chance of finding a fixing taxon, cf. Figure \ref{fig_simresults}. NPV also increases with increasing numbers of input quadruples, which is another indicator that Algorithm \ref{alg_generalFTT} works well when sufficiently many input quadruples are given: It indicates that the false negatives become neglible as $k_n$ grows. 

PREV, on the other hand, also tends to 1 as $k_n$ grows. This observation is independent of Algorithm \ref{alg_generalFTT}: It indicates that the more input quadruples we have, the more likely it is that this set of quadruples is phylogenetically decisive. To get a definitive value, this was explicitly tested using the four-way partition property, not Algorithm \ref{alg_generalFTT}.

\begin{figure}
\vspace{0.5cm} 
   \includegraphics[width=12cm]{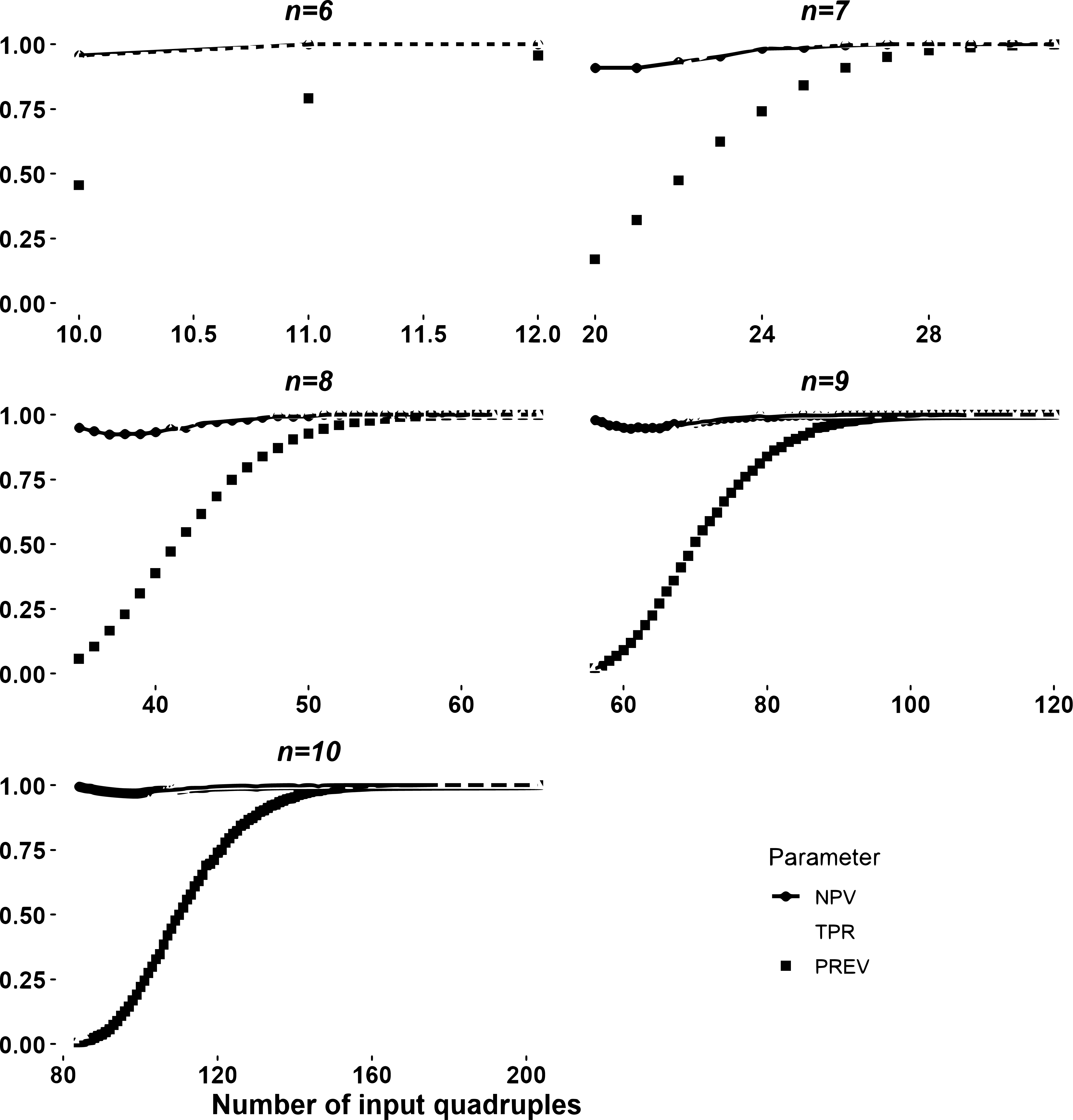} 
    \caption{Overview of the parameters PREV, TPR and NPV for different numbers of input quadruples. TPR describes the power of Algorithm \ref{alg_generalFTT} to detect phylogenetic decisiveness.}\label{fig_simresults}

\end{figure}

In Table \ref{tab:boundslist}, we list the bounds of phylogentic decisiveness as simulated. For $n=6$, the bounds are guaranteed to be true, as all possible combinations were exhaustively tested. For $n=6$ and $n=7$, the simulation even revealed the true upper bound for decisiveness according to Theorem \ref{thm_moan}.

\begin{table}[ht]
    \centering
\begin{tabular}{ccccccccl}
\cline{4-8}
                       &   & \multicolumn{1}{l|}{}  & \multicolumn{5}{c|}{$n$:}               &                     \\ \cline{1-3} \cline{9-9} 
\multicolumn{1}{|l}{} & 4WPP & \multicolumn{1}{l|}{FTT} & 6 & 7 & 8 & 9 & \multicolumn{1}{l|}{10} & \multicolumn{1}{l|}{comment} \\ 
    \hline\hline
    \multicolumn{1}{|c}{$\min(k_{n})$} & $\checkmark$ & -- & \bf{9}& 17    & 30    &  46   &  72    &  \multicolumn{1}{l|}{simulated} \\
     \multicolumn{1}{|c}{}& $\checkmark$ & -- & \bf{9}& 14    & 26    &  46   &  72    & \multicolumn{1}{l|}{best known } \\
    \multicolumn{1}{|c}{} & $\checkmark$ & -- &  5    & 9     & 14    & 21    & 30     & \multicolumn{1}{l|}{theoretical} \\
           \multicolumn{1}{|c}{}       &   &   &       &       &       &       &        & \multicolumn{1}{l|}{(Th. \ref{thm_lowerboundphylodec})} \\ 
    \hline
    \multicolumn{1}{|c}{$\min(k_{n})$} & $\checkmark$ & $\checkmark$ & \bf{10}& \bf{20}& \bf{35}&  \bf{56}&  \bf{84} & \multicolumn{1}{l|}{simulated} \\
    \multicolumn{1}{|c}{} & $\checkmark$ & $\checkmark$ & \bf{10}& \bf{20}& \bf{35}&  \bf{56}&  \bf{84} & \multicolumn{1}{l|}{tight (Cor. \ref{cor_lowerbound})} \\
    \hline \hline
    \multicolumn{1}{|c}{$\max(k_{n})$} & -- & -- & 12    & 31    & 62    & 113   & 183    & \multicolumn{1}{l|}{simulated} \\
    \multicolumn{1}{|c}{} & -- & -- &\bf{12}&\bf{31}&\bf{65}&\bf{120}&\bf{203}& \multicolumn{1}{l|}{tight (Th. \ref{thm_moan})} \\
    \hline 
    \multicolumn{1}{|c}{$\max(k_{n})$} & $\checkmark$ & -- & 10& 26& 50&  89& 146 & \multicolumn{1}{l|}{simulated}   \\
   \multicolumn{1}{|c}{}  & $\checkmark$ & -- & 10& 27& 59&  112& 193 & \multicolumn{1}{l|}{best known} \\
    \multicolumn{1}{|c}{}
                  &   &   &       &       &       &       &        & \multicolumn{1}{l|}{(Th. \ref{thm_upperboundFTTneqPhyloDec})}\\ \hdashline
     \multicolumn{1} {|c}{$\max(k'_{n})$} & $\checkmark$ & -- & 10& 27& 59& 112& 193 & \multicolumn{1}{l|}{simulated} \\
    \hline
    \end{tabular}
    \caption{Bounds on the number $k_n$ of quadruples found in a set $\sS$ of taxon sets fulfilling the four-way partition property (4WPP =$\checkmark$) or not (4WPP = --), or fulfilling fixing taxon traceability (FTT =$\checkmark$) or not (FTT = --). In some cases, we constructed examples that improved the bounds found by the simulations; in this case, best known results may differ from simulated ones. For comparison, we also list theoretical bounds known from previous sections. All bounds known to be tight (e.g., by exhaustive search) are printed in bold.}
    \label{tab:boundslist}
\end{table}

Moreover, some of the bounds stated in Table \ref{tab:boundslist} found by the simulations we performed are already known not to be best possible. For instance, for $n=7$ and $n=8$ we were able to manually construct phylogenetically decisive collections of taxon sets employing 14 and 26 quadruples, respectively (cf. Example \ref{ex_steinerbased}), but the smallest such sets discovered during our simulation had sizes 17 and 30, respectively. This can be attributed to the number of possible sets of quadruples: For $n=7$, there are $\binom{7}{4}=35$ quadruples, and thus there are $2^{35}-1=34,359,738,367$ possible non-empty sets of quadruples. Similarly, for $n=8$, there are $2^{70}-1$ possible non-empty sets of quadruples. Sampling 10,000 such sets each thus only amounts to an extremely small percentage of quadruple sets, which is why it is not surprising that the correct value cannot always be found. On the other hand, concerning the minimum value $k_n$ for which FTT collections of taxon sets were discovered, our simulations for $n=6,\ldots,10$ always found the correct values as given by Corollary \ref{cor_lowerbound}. In fact, the simulations originally pointed us to the pattern and thus inspired us to state and prove Theorem \ref{thm_generaloverlapcoloring}, which is the basis for said corollary.

\begin{example}\label{ex_steinerbased}
$\mbox{}$
\begin{itemize}
\item Let $X=\{1,\ldots,7\}$ and $\sS=\{\{1, 2, 3, 4\}, \{1, 2, 5, 6\}, \{1, 3, 5, 7\}, \{1, 4, 6, 7\}, $ $\left\{3, 4, 5, 6\}, \{2, 4, 5, 7\},\right.\\ \left. \{2, 3, 6, 7\}, \{1, 2, 4, 6\}, \{1, 2, 5, 7\}, \{1, 3, 4, 5\right\},\{1, 3, 6,7\}, \{4, 5, 6, 7\}, \{2, 3, 5, 6\}, \{2, 3, 4, 7\}\}$. Then, $\sS$ is phylogenetically decisive but not fixing taxon traceable (calculations not shown). Moreover, $\sS$ contains only 14 quadruples and is thus smaller than the smallest set (size 17) recovered by our simulation approach.
\item Let $X=\{1,\ldots,8\}$ and $\sS=\{\{1, 2, 5, 6\}, \{1, 2, 7, 8\}, \{1, 3, 5, 7\}, \{1, 3, 6, 8\},$ $  \{1, 4, 5, 8\}, \{1, 4, 6, 7\},$ \\ $ \{3, 4, 7, 8\}, \{3, 4, 5, 6\}, \{2, 4, 6, 8\}, \{2, 4, 5, 7\},\{2, 3, 6, 7\}, \{2, 3, 5, 8\}, \{1, 2, 3, 8\}, \{1, 2, 4, 6\}, \{1, 2, 5, 7\}, $ \\ $\{1, 3, 4, 5\}, \{1, 3, 6, 7\}, \{1, 4, 7, 8\}, \{1, 5, 6, 8\}, \{4, 5, 6, 7\}, \{3, 5, 7, 8\}, \{3, 4, 6, 8\}, \{2, 6, 7, 8\}, \{2, 4, 5, 8\}, $ \\ $ \{2, 3, 5, 6\}, \{2, 3, 4, 7\}\}$. Then, $\sS$ is phylogenetically decisive but not fixing taxon traceable (calculations not shown). Moreover, $\sS$ contains only 26 quadruples and is thus smaller than the smallest set (size 30) recovered by our simulation approach.
\end{itemize}
\end{example}

\section{Discussion and outlook}\label{discussion}
Deriving new information on taxa by combining various compatible trees on different taxon sets into one common supertree is a challenging task, even if the input taxon sets are compatible with one another. It is therefore understandable that phylogeneticists seek to understand a priori if a particular set of input trees has the potential to deliver new information -- in the perfect case even to provide a unique supertree. 

However, the phylogenetic decisiveness problem has recently been shown to be coNP-complete. Therefore, finding classes of collections of taxon sets which are guaranteed to be phylogenetically decisive and which can be identified in polynomial time is a relevant task. One such class is the class of fixing taxon traceable collections of taxon sets, which we introduced in this manuscript. Finding more such classes and investigating the differences between them and phylogenetic decisiveness more in-depth will be an interesting topic for future research. 

In order to investigate these differences, it may help to expand the collections of taxon sets under investigation by adding those CQs that do have fixing taxa, i.e., by running Algorithm  \ref{alg_generalFTT} and to analyze the resulting collection of taxon sets instead, even if it is not fixing taxon traceable. This way, our algorithm actually helps to sort cross quadruples into unproblematic and problematic ones, which can be helpful for biologists. For instance, this might help to decide which taxa to take out of a study as they are contained in too many problematic (i.e., not fixing taxon resolvable) quadruples, or -- on the contrary -- which additional quadruples need to be sampled to guarantee decisiveness.

Moreover, while fixing taxon traceability has allowed us to answer some questions concerning decisiveness, a tight lower bound for the number of quadruples needed for a collection of taxon sets to be phylogenetically decisive is still unknown. Theorem \ref{thm_lowerboundphylodec} gives a range for this number, but as we have seen in our simulations, the lower end of this interval is often pretty far from the number of quadruples that were really needed, which is why we suspect that the upper end of the interval might be closer to the truth. Similarly, a tight upper bound for the number of quadruples in a collection of taxon sets that is phylogenetically decisive but not fixing taxon traceable is also still unknown, but -- again, based on our simulations -- we conjecture that the number of quadruples induced by Theorem \ref{thm_upperboundFTTneqPhyloDec} is a tight bound, cf. Conjecture \ref{con_upperbound}. We leave it open for future research. 

So while the questions posed here are beyond the scope of the current manuscript, we are confident that they will inspire new research ideas in future. In particular, we believe that fixing taxon traceable collections of taxon sets will play an important role in biological applications as they can be identified efficiently and as it has only been proven recently that this is not the case for all phylogenetically decisive collections of taxon sets in general.

\subsection*{Acknowledgement}
We wish to thank Mike Steel and Leo van Iersel for helpful discussions on the topic, including but not limited to insights concerning the relationship between Problems \ref{decprob} and \ref{rainbowprob}. Moreover, we thank Steven Kelk for patiently answering some complexity questions. 

\subsection*{Statements and Declarations}

\subsubsection*{Data availability}
The authors declare that the data supporting the findings of this study are available in the Github repository accompanying this manuscript \cite{githubsimulation}.

\subsubsection*{Funding}
The authors declare that no funds, grants, or other support were received during the preparation of this manuscript.

\subsubsection*{Competing Interests}
The authors have no relevant financial or non-financial interests to disclose.

\subsubsection*{Author contributions}
MF contributed the mathematical proofs presented in the manuscript. JP contributed Example \ref{badexample} as well as the simulation study presented in Section \ref{sec_sim}. JP also provided the \textsf{R} software and package \verb+FixingTaxonTraceR+ accompanying this manuscript. Both authors contributed to the study conception and design. The first draft of the manuscript was written by MF, and both authors commented on previous versions of the manuscript. All authors read and approved the final manuscript.

\bibliographystyle{plainnat}      
 \bibliography{fischer_bibfile}

 \newpage 
 \section*{Appendix}

\begin{table}\tiny
$
\begin{tabular}{cccc||c}
\multicolumn{4}{c||}{input trees}& \text{possible supertrees} \\ \hline 
 \text{(1, 2, (3, 4))} & \text{(1, 2, (3, 5))} & \text{(1, 3, (4, 5))} & \text{(2, 3, (4, 5))} & \text{(1, 2, (3, (4, 5)))} \\
 \text{(1, 3, (2, 4))} & \text{(1, 2, (3, 5))} & \text{(1, 3, (4, 5))} & \text{(2, 3, (4, 5))} &   \\
 \text{(1, 4, (2, 3))} & \text{(1, 2, (3, 5))} & \text{(1, 3, (4, 5))} & \text{(2, 3, (4, 5))} &   \\
 \text{(1, 2, (3, 4))} & \text{(1, 3, (2, 5))} & \text{(1, 3, (4, 5))} & \text{(2, 3, (4, 5))} &   \\
 \text{(1, 3, (2, 4))} & \text{(1, 3, (2, 5))} & \text{(1, 3, (4, 5))} & \text{(2, 3, (4, 5))} & \text{(1, (2, (4, 5)), 3)} \\
 \text{(1, 4, (2, 3))} & \text{(1, 3, (2, 5))} & \text{(1, 3, (4, 5))} & \text{(2, 3, (4, 5))} &   \\
 \text{(1, 2, (3, 4))} & \text{(1, 5, (2, 3))} & \text{(1, 3, (4, 5))} & \text{(2, 3, (4, 5))} &   \\
 \text{(1, 3, (2, 4))} & \text{(1, 5, (2, 3))} & \text{(1, 3, (4, 5))} & \text{(2, 3, (4, 5))} &   \\
 \text{(1, 4, (2, 3))} & \text{(1, 5, (2, 3))} & \text{(1, 3, (4, 5))} & \text{(2, 3, (4, 5))} & \text{(1, (2, 3), (4, 5))} \\
 \text{(1, 2, (3, 4))} & \text{(1, 2, (3, 5))} & \text{(1, 4, (3, 5))} & \text{(2, 3, (4, 5))} &   \\
 \text{(1, 3, (2, 4))} & \text{(1, 2, (3, 5))} & \text{(1, 4, (3, 5))} & \text{(2, 3, (4, 5))} &   \\
 \text{(1, 4, (2, 3))} & \text{(1, 2, (3, 5))} & \text{(1, 4, (3, 5))} & \text{(2, 3, (4, 5))} &   \\
 \text{(1, 2, (3, 4))} & \text{(1, 3, (2, 5))} & \text{(1, 4, (3, 5))} & \text{(2, 3, (4, 5))} &   \\
 \text{(1, 3, (2, 4))} & \text{(1, 3, (2, 5))} & \text{(1, 4, (3, 5))} & \text{(2, 3, (4, 5))} &   \\
 \text{(1, 4, (2, 3))} & \text{(1, 3, (2, 5))} & \text{(1, 4, (3, 5))} & \text{(2, 3, (4, 5))} &   \\
 \text{(1, 2, (3, 4))} & \text{(1, 5, (2, 3))} & \text{(1, 4, (3, 5))} & \text{(2, 3, (4, 5))} &   \\
 \text{(1, 3, (2, 4))} & \text{(1, 5, (2, 3))} & \text{(1, 4, (3, 5))} & \text{(2, 3, (4, 5))} &   \\
 \text{(1, 4, (2, 3))} & \text{(1, 5, (2, 3))} & \text{(1, 4, (3, 5))} & \text{(2, 3, (4, 5))} & \text{(1, ((2, 3), 5), 4)} \\
 \text{(1, 2, (3, 4))} & \text{(1, 2, (3, 5))} & \text{(1, 5, (3, 4))} & \text{(2, 3, (4, 5))} &   \\
 \text{(1, 3, (2, 4))} & \text{(1, 2, (3, 5))} & \text{(1, 5, (3, 4))} & \text{(2, 3, (4, 5))} &   \\
 \text{(1, 4, (2, 3))} & \text{(1, 2, (3, 5))} & \text{(1, 5, (3, 4))} & \text{(2, 3, (4, 5))} &   \\
 \text{(1, 2, (3, 4))} & \text{(1, 3, (2, 5))} & \text{(1, 5, (3, 4))} & \text{(2, 3, (4, 5))} &   \\
 \text{(1, 3, (2, 4))} & \text{(1, 3, (2, 5))} & \text{(1, 5, (3, 4))} & \text{(2, 3, (4, 5))} &   \\
 \text{(1, 4, (2, 3))} & \text{(1, 3, (2, 5))} & \text{(1, 5, (3, 4))} & \text{(2, 3, (4, 5))} &   \\
 \text{(1, 2, (3, 4))} & \text{(1, 5, (2, 3))} & \text{(1, 5, (3, 4))} & \text{(2, 3, (4, 5))} &   \\
 \text{(1, 3, (2, 4))} & \text{(1, 5, (2, 3))} & \text{(1, 5, (3, 4))} & \text{(2, 3, (4, 5))} &   \\
 \text{(1, 4, (2, 3))} & \text{(1, 5, (2, 3))} & \text{(1, 5, (3, 4))} & \text{(2, 3, (4, 5))} & \text{(1, ((2, 3), 4), 5)} \\
 \text{(1, 2, (3, 4))} & \text{(1, 2, (3, 5))} & \text{(1, 3, (4, 5))} & \text{(2, 4, (3, 5))} &   \\
 \text{(1, 3, (2, 4))} & \text{(1, 2, (3, 5))} & \text{(1, 3, (4, 5))} & \text{(2, 4, (3, 5))} &   \\
 \text{(1, 4, (2, 3))} & \text{(1, 2, (3, 5))} & \text{(1, 3, (4, 5))} & \text{(2, 4, (3, 5))} &   \\
 \text{(1, 2, (3, 4))} & \text{(1, 3, (2, 5))} & \text{(1, 3, (4, 5))} & \text{(2, 4, (3, 5))} &   \\
 \text{(1, 3, (2, 4))} & \text{(1, 3, (2, 5))} & \text{(1, 3, (4, 5))} & \text{(2, 4, (3, 5))} & \text{(1, ((2, 4), 5), 3)} \\
 \text{(1, 4, (2, 3))} & \text{(1, 3, (2, 5))} & \text{(1, 3, (4, 5))} & \text{(2, 4, (3, 5))} &   \\
 \text{(1, 2, (3, 4))} & \text{(1, 5, (2, 3))} & \text{(1, 3, (4, 5))} & \text{(2, 4, (3, 5))} &   \\
 \text{(1, 3, (2, 4))} & \text{(1, 5, (2, 3))} & \text{(1, 3, (4, 5))} & \text{(2, 4, (3, 5))} &   \\
 \text{(1, 4, (2, 3))} & \text{(1, 5, (2, 3))} & \text{(1, 3, (4, 5))} & \text{(2, 4, (3, 5))} &   \\
 \text{(1, 2, (3, 4))} & \text{(1, 2, (3, 5))} & \text{(1, 4, (3, 5))} & \text{(2, 4, (3, 5))} & \text{(1, 2, ((3, 5), 4))} \\
 \text{(1, 3, (2, 4))} & \text{(1, 2, (3, 5))} & \text{(1, 4, (3, 5))} & \text{(2, 4, (3, 5))} & \text{(1, (2, 4), (3, 5))} \\
 \text{(1, 4, (2, 3))} & \text{(1, 2, (3, 5))} & \text{(1, 4, (3, 5))} & \text{(2, 4, (3, 5))} & \text{(1, (2, (3, 5)), 4)} \\
 \text{(1, 2, (3, 4))} & \text{(1, 3, (2, 5))} & \text{(1, 4, (3, 5))} & \text{(2, 4, (3, 5))} &   \\
 \text{(1, 3, (2, 4))} & \text{(1, 3, (2, 5))} & \text{(1, 4, (3, 5))} & \text{(2, 4, (3, 5))} &   \\
 \text{(1, 4, (2, 3))} & \text{(1, 3, (2, 5))} & \text{(1, 4, (3, 5))} & \text{(2, 4, (3, 5))} &   \\
 \text{(1, 2, (3, 4))} & \text{(1, 5, (2, 3))} & \text{(1, 4, (3, 5))} & \text{(2, 4, (3, 5))} &   \\
 \text{(1, 3, (2, 4))} & \text{(1, 5, (2, 3))} & \text{(1, 4, (3, 5))} & \text{(2, 4, (3, 5))} &   \\
 \text{(1, 4, (2, 3))} & \text{(1, 5, (2, 3))} & \text{(1, 4, (3, 5))} & \text{(2, 4, (3, 5))} &   \\
 \text{(1, 2, (3, 4))} & \text{(1, 2, (3, 5))} & \text{(1, 5, (3, 4))} & \text{(2, 4, (3, 5))} &   \\
 \text{(1, 3, (2, 4))} & \text{(1, 2, (3, 5))} & \text{(1, 5, (3, 4))} & \text{(2, 4, (3, 5))} &   \\
 \text{(1, 4, (2, 3))} & \text{(1, 2, (3, 5))} & \text{(1, 5, (3, 4))} & \text{(2, 4, (3, 5))} &   \\
 \text{(1, 2, (3, 4))} & \text{(1, 3, (2, 5))} & \text{(1, 5, (3, 4))} & \text{(2, 4, (3, 5))} &   \\
 \text{(1, 3, (2, 4))} & \text{(1, 3, (2, 5))} & \text{(1, 5, (3, 4))} & \text{(2, 4, (3, 5))} &   \\
 \text{(1, 4, (2, 3))} & \text{(1, 3, (2, 5))} & \text{(1, 5, (3, 4))} & \text{(2, 4, (3, 5))} &   \\
 \text{(1, 2, (3, 4))} & \text{(1, 5, (2, 3))} & \text{(1, 5, (3, 4))} & \text{(2, 4, (3, 5))} &   \\
 \text{(1, 3, (2, 4))} & \text{(1, 5, (2, 3))} & \text{(1, 5, (3, 4))} & \text{(2, 4, (3, 5))} & \text{(1, ((2, 4), 3), 5)} \\
 \text{(1, 4, (2, 3))} & \text{(1, 5, (2, 3))} & \text{(1, 5, (3, 4))} & \text{(2, 4, (3, 5))} &   \\
 \text{(1, 2, (3, 4))} & \text{(1, 2, (3, 5))} & \text{(1, 3, (4, 5))} & \text{(2, 5, (3, 4))} &   \\
 \text{(1, 3, (2, 4))} & \text{(1, 2, (3, 5))} & \text{(1, 3, (4, 5))} & \text{(2, 5, (3, 4))} &   \\
 \text{(1, 4, (2, 3))} & \text{(1, 2, (3, 5))} & \text{(1, 3, (4, 5))} & \text{(2, 5, (3, 4))} &   \\
 \text{(1, 2, (3, 4))} & \text{(1, 3, (2, 5))} & \text{(1, 3, (4, 5))} & \text{(2, 5, (3, 4))} &   \\
 \text{(1, 3, (2, 4))} & \text{(1, 3, (2, 5))} & \text{(1, 3, (4, 5))} & \text{(2, 5, (3, 4))} & \text{(1, ((2, 5), 4), 3)} \\
 \text{(1, 4, (2, 3))} & \text{(1, 3, (2, 5))} & \text{(1, 3, (4, 5))} & \text{(2, 5, (3, 4))} &   \\
 \text{(1, 2, (3, 4))} & \text{(1, 5, (2, 3))} & \text{(1, 3, (4, 5))} & \text{(2, 5, (3, 4))} &   \\
 \text{(1, 3, (2, 4))} & \text{(1, 5, (2, 3))} & \text{(1, 3, (4, 5))} & \text{(2, 5, (3, 4))} &   \\
 \text{(1, 4, (2, 3))} & \text{(1, 5, (2, 3))} & \text{(1, 3, (4, 5))} & \text{(2, 5, (3, 4))} &   \\
 \text{(1, 2, (3, 4))} & \text{(1, 2, (3, 5))} & \text{(1, 4, (3, 5))} & \text{(2, 5, (3, 4))} &   \\
 \text{(1, 3, (2, 4))} & \text{(1, 2, (3, 5))} & \text{(1, 4, (3, 5))} & \text{(2, 5, (3, 4))} &   \\
 \text{(1, 4, (2, 3))} & \text{(1, 2, (3, 5))} & \text{(1, 4, (3, 5))} & \text{(2, 5, (3, 4))} &   \\
 \text{(1, 2, (3, 4))} & \text{(1, 3, (2, 5))} & \text{(1, 4, (3, 5))} & \text{(2, 5, (3, 4))} &   \\
 \text{(1, 3, (2, 4))} & \text{(1, 3, (2, 5))} & \text{(1, 4, (3, 5))} & \text{(2, 5, (3, 4))} &   \\
 \text{(1, 4, (2, 3))} & \text{(1, 3, (2, 5))} & \text{(1, 4, (3, 5))} & \text{(2, 5, (3, 4))} & \text{(1, ((2, 5), 3), 4)} \\
 \text{(1, 2, (3, 4))} & \text{(1, 5, (2, 3))} & \text{(1, 4, (3, 5))} & \text{(2, 5, (3, 4))} &   \\
 \text{(1, 3, (2, 4))} & \text{(1, 5, (2, 3))} & \text{(1, 4, (3, 5))} & \text{(2, 5, (3, 4))} &   \\
 \text{(1, 4, (2, 3))} & \text{(1, 5, (2, 3))} & \text{(1, 4, (3, 5))} & \text{(2, 5, (3, 4))} &   \\
 \text{(1, 2, (3, 4))} & \text{(1, 2, (3, 5))} & \text{(1, 5, (3, 4))} & \text{(2, 5, (3, 4))} & \text{(1, 2, ((3, 4), 5))} \\
 \text{(1, 3, (2, 4))} & \text{(1, 2, (3, 5))} & \text{(1, 5, (3, 4))} & \text{(2, 5, (3, 4))} &   \\
 \text{(1, 4, (2, 3))} & \text{(1, 2, (3, 5))} & \text{(1, 5, (3, 4))} & \text{(2, 5, (3, 4))} &   \\
 \text{(1, 2, (3, 4))} & \text{(1, 3, (2, 5))} & \text{(1, 5, (3, 4))} & \text{(2, 5, (3, 4))} & \text{(1, (2, 5), (3, 4))} \\
 \text{(1, 3, (2, 4))} & \text{(1, 3, (2, 5))} & \text{(1, 5, (3, 4))} & \text{(2, 5, (3, 4))} &   \\
 \text{(1, 4, (2, 3))} & \text{(1, 3, (2, 5))} & \text{(1, 5, (3, 4))} & \text{(2, 5, (3, 4))} &   \\
 \text{(1, 2, (3, 4))} & \text{(1, 5, (2, 3))} & \text{(1, 5, (3, 4))} & \text{(2, 5, (3, 4))} & \text{(1, (2, (3, 4)), 5)} \\
 \text{(1, 3, (2, 4))} & \text{(1, 5, (2, 3))} & \text{(1, 5, (3, 4))} & \text{(2, 5, (3, 4))} &   \\
 \text{(1, 4, (2, 3))} & \text{(1, 5, (2, 3))} & \text{(1, 5, (3, 4))} & \text{(2, 5, (3, 4))} &   \\
\end{tabular}
$
\caption{\label{tab:examplealltrees} All 81 combinations of input trees (given in the well-known \emph{Newick format} of nested parentheses, in which each pair of parentheses stands for an inner node of the tree and the elements in the parentheses denote its neighbors) for the collection of taxon sets $\mathcal{S}=\{ \{1,2,3,4\},\{1,2,3,5\},\{1,3,4,5\},\{2,3,4,5\}\}$ and all their respective possible supertrees. It can be seen that for each compatible choice of input trees, the supertree is unique. Whenever there exists no supertree (i.e., whenever the last column has no entry), the choice of input trees is not compatible.} 
\end{table}

{\small
\begin{algorithm}
\caption{Fixing taxon $c$-traceability: \enquote{graph-free} approach}\label{alg_generalFTT_WOGRAPH}
\LinesNumbered
 \SetKwInOut{Input}{Input}\SetKwInOut{Output}{Output}
 \vspace{0.15 cm}
 \Input{ $c \in \mathbb{N}_{\geq 1}$\\
 $n \in \mathbb{N}_{\geq c}$ \\
 set $\sS=\{Y_1,\ldots,Y_k\}: \ Y_i \subseteq X=\{1,\ldots,n\} \wedge \lvert Y_i\rvert \geq c 
 \ \forall i=1,\ldots,k$}
 \Output{\textsf{TRUE} if $\sS$ is fixing taxon $c$-traceable, \textsf{FALSE} else}
 \Init{}{
 $X \gets \{1,\ldots,n\}$\\
 \For{$i\gets 1$ \KwTo $\binom{n}{c}$}{
 $Q(i)\gets \mbox{$i^{th}$ $c$-tuple of $\binom{X}{c}$}$ \\
 $white(Q(i)) \gets 0$ }  
 $whiteCounter\gets 0$\\
 $newWhites \gets \emptyset$\\
}

\For{$i\gets 1$ \KwTo $ \binom{n}{c} $ }
{\For{$j \gets 1$ \KwTo $k$}{
\If{$Q(i) \subseteq Y_j$}{$white(Q(i))\gets 1$ \\ $newWhites \gets newWhites \cup \{Q(i)\}$\\
$whiteCounter \gets whiteCounter +1$\\
break\\ }
}}

\While{$whiteCounter < \binom{n}{c}$ \& $newWhites \neq \emptyset$}{

$tuple \gets newWhites(1)$\\
$X'\gets X \setminus tuple$ \\

\For{$i \gets 1$ \KwTo $|X'|=n-c$}{
$x \gets X'(i)$

$S \gets \binom{tuple\cup\{x\}}{c}$ \\ 
\tcc*[f]{set $S$ contains all possible $c$-tuples using only numbers from $tuple \cup \{x\}$ }\\
\tcc*[f]{Next: Check if there are $c$ white $c$-tuples in $S$}\\
\If { $\sum\limits_{j: Q(j) \in S} 
 white(Q(j))==c$}{ \For{$j\gets 1$ \KwTo $|S|$ }{\If{ $white(Q(j))==0$}{$white(Q(j))=1$\\ $newWhites \gets newWhites \cup \{Q(j)\}$\\
 $whiteCounter \gets whiteCounter+1$\\ break}}
 }
}
$newWhites\gets newWhites\setminus\{newWhites(1)\}$\\
}
\If{$whiteCounter==\binom{n}{c}$}{\Return{$\mathsf{TRUE}$}}
\Else{\Return{$\mathsf{FALSE}$}}
\end{algorithm}
}

\begin{table}\tiny
$ 
\begin{tabular}{cc@{\hspace{-0.001cm}}c@{\hspace{-0.001cm}}c@{\hspace{-0.001cm}}c@{\hspace{-0.001cm}}c@{\hspace{-0.001cm}}c@{\hspace{-0.001cm}}c@{\hspace{-0.001cm}}c@{\hspace{-0.001cm}}c@{\hspace{-0.001cm}}c@{\hspace{-0.001cm}}c}
&\{1,2,3,5\} & \{1,2,3,6\} & \{1,2,4,5\} & \{1,2,4,6\} &\{1,2,5,6\} &\{1,3,4,6\} &\{1,3,5,6\} &\{1,4,5,6\} &\{2,3,4,5\} &\{2,3,4,6\} &\{2,3,5,6\} \\
 \{\{1\},\{2\},\{3\},\{4,5,6\}\} & 1 & 1 & 0 & 0 & 0 & 0 & 0 & 0 & 0 & 0 & 0 \\
 \{\{1\},\{2\},\{3,4\},\{5,6\}\} & 1 & 1 & 1 & 1 & 0 & 0 & 0 & 0 & 0 & 0 & 0 \\
 \{\{1\},\{2\},\{3,5,6\},\{4\}\} & 0 & 0 & 1 & 1 & 0 & 0 & 0 & 0 & 0 & 0 & 0 \\
 \{\{1\},\{2\},\{3,4,5\},\{6\}\} & 0 & 1 & 0 & 1 & 1 & 0 & 0 & 0 & 0 & 0 & 0 \\
 \{\{1\},\{2\},\{3,6\},\{4,5\}\} & 1 & 0 & 0 & 1 & 1 & 0 & 0 & 0 & 0 & 0 & 0 \\
 \{\{1\},\{2\},\{3,4,6\},\{5\}\} & 1 & 0 & 1 & 0 & 1 & 0 & 0 & 0 & 0 & 0 & 0 \\
 \{\{1\},\{2\},\{3,5\},\{4,6\}\} & 0 & 1 & 1 & 0 & 1 & 0 & 0 & 0 & 0 & 0 & 0 \\
 \{\{1\},\{2,3\},\{4\},\{5,6\}\} & 0 & 0 & 1 & 1 & 0 & 1 & 0 & 0 & 0 & 0 & 0 \\
 \{\{1\},\{2,4\},\{3\},\{5,6\}\} & 1 & 1 & 0 & 0 & 0 & 1 & 0 & 0 & 0 & 0 & 0 \\
 \{\{1\},\{2,5,6\},\{3\},\{4\}\} & 0 & 0 & 0 & 0 & 0 & 1 & 0 & 0 & 0 & 0 & 0 \\
 \{\{1\},\{2,3\},\{4,5\},\{6\}\} & 0 & 0 & 0 & 1 & 1 & 1 & 1 & 0 & 0 & 0 & 0 \\
 \{\{1\},\{2,4,5\},\{3\},\{6\}\} & 0 & 1 & 0 & 0 & 0 & 1 & 1 & 0 & 0 & 0 & 0 \\
 \{\{1\},\{2,6\},\{3\},\{4,5\}\} & 1 & 0 & 0 & 0 & 0 & 1 & 1 & 0 & 0 & 0 & 0 \\
 \{\{1\},\{2,3\},\{4,6\},\{5\}\} & 0 & 0 & 1 & 0 & 1 & 0 & 1 & 0 & 0 & 0 & 0 \\
 \{\{1\},\{2,4,6\},\{3\},\{5\}\} & 1 & 0 & 0 & 0 & 0 & 0 & 1 & 0 & 0 & 0 & 0 \\
 \{\{1\},\{2,5\},\{3\},\{4,6\}\} & 0 & 1 & 0 & 0 & 0 & 0 & 1 & 0 & 0 & 0 & 0 \\
 \{\{1\},\{2,3,4\},\{5\},\{6\}\} & 0 & 0 & 0 & 0 & 1 & 0 & 1 & 1 & 0 & 0 & 0 \\
 \{\{1\},\{2,5\},\{3,4\},\{6\}\} & 0 & 1 & 0 & 1 & 0 & 0 & 1 & 1 & 0 & 0 & 0 \\
 \{\{1\},\{2,6\},\{3,4\},\{5\}\} & 1 & 0 & 1 & 0 & 0 & 0 & 1 & 1 & 0 & 0 & 0 \\
 \{\{1\},\{2,3,5\},\{4\},\{6\}\} & 0 & 0 & 0 & 1 & 0 & 1 & 0 & 1 & 0 & 0 & 0 \\
 \{\{1\},\{2,4\},\{3,5\},\{6\}\} & 0 & 1 & 0 & 0 & 1 & 1 & 0 & 1 & 0 & 0 & 0 \\
 \{\{1\},\{2,6\},\{3,5\},\{4\}\} & 0 & 0 & 1 & 0 & 0 & 1 & 0 & 1 & 0 & 0 & 0 \\
 \{\{1\},\{2,3,6\},\{4\},\{5\}\} & 0 & 0 & 1 & 0 & 0 & 0 & 0 & 1 & 0 & 0 & 0 \\
 \{\{1\},\{2,4\},\{3,6\},\{5\}\} & 1 & 0 & 0 & 0 & 1 & 0 & 0 & 1 & 0 & 0 & 0 \\
 \{\{1\},\{2,5\},\{3,6\},\{4\}\} & 0 & 0 & 0 & 1 & 0 & 0 & 0 & 1 & 0 & 0 & 0 \\
 \{\{1,2\},\{3\},\{4\},\{5,6\}\} & 0 & 0 & 0 & 0 & 0 & 1 & 0 & 0 & 1 & 1 & 0 \\
 \{\{1,3\},\{2\},\{4\},\{5,6\}\} & 0 & 0 & 1 & 1 & 0 & 0 & 0 & 0 & 1 & 1 & 0 \\
 \{\{1,4\},\{2\},\{3\},\{5,6\}\} & 1 & 1 & 0 & 0 & 0 & 0 & 0 & 0 & 1 & 1 & 0 \\
 \{\{1,5,6\},\{2\},\{3\},\{4\}\} & 0 & 0 & 0 & 0 & 0 & 0 & 0 & 0 & 1 & 1 & 0 \\
 \{\{1,2\},\{3\},\{4,5\},\{6\}\} & 0 & 0 & 0 & 0 & 0 & 1 & 1 & 0 & 0 & 1 & 1 \\
 \{\{1,3\},\{2\},\{4,5\},\{6\}\} & 0 & 0 & 0 & 1 & 1 & 0 & 0 & 0 & 0 & 1 & 1 \\
 \{\{1,4,5\},\{2\},\{3\},\{6\}\} & 0 & 1 & 0 & 0 & 0 & 0 & 0 & 0 & 0 & 1 & 1 \\
 \{\{1,6\},\{2\},\{3\},\{4,5\}\} & 1 & 0 & 0 & 0 & 0 & 0 & 0 & 0 & 0 & 1 & 1 \\
 \{\{1,2\},\{3\},\{4,6\},\{5\}\} & 0 & 0 & 0 & 0 & 0 & 0 & 1 & 0 & 1 & 0 & 1 \\
 \{\{1,3\},\{2\},\{4,6\},\{5\}\} & 0 & 0 & 1 & 0 & 1 & 0 & 0 & 0 & 1 & 0 & 1 \\
 \{\{1,4,6\},\{2\},\{3\},\{5\}\} & 1 & 0 & 0 & 0 & 0 & 0 & 0 & 0 & 1 & 0 & 1 \\
 \{\{1,5\},\{2\},\{3\},\{4,6\}\} & 0 & 1 & 0 & 0 & 0 & 0 & 0 & 0 & 1 & 0 & 1 \\
 \{\{1,2\},\{3,4\},\{5\},\{6\}\} & 0 & 0 & 0 & 0 & 0 & 0 & 1 & 1 & 0 & 0 & 1 \\
 \{\{1,3,4\},\{2\},\{5\},\{6\}\} & 0 & 0 & 0 & 0 & 1 & 0 & 0 & 0 & 0 & 0 & 1 \\
 \{\{1,5\},\{2\},\{3,4\},\{6\}\} & 0 & 1 & 0 & 1 & 0 & 0 & 0 & 0 & 0 & 0 & 1 \\
 \{\{1,6\},\{2\},\{3,4\},\{5\}\} & 1 & 0 & 1 & 0 & 0 & 0 & 0 & 0 & 0 & 0 & 1 \\
 \{\{1,2\},\{3,5\},\{4\},\{6\}\} & 0 & 0 & 0 & 0 & 0 & 1 & 0 & 1 & 0 & 1 & 0 \\
 \{\{1,3,5\},\{2\},\{4\},\{6\}\} & 0 & 0 & 0 & 1 & 0 & 0 & 0 & 0 & 0 & 1 & 0 \\
 \{\{1,4\},\{2\},\{3,5\},\{6\}\} & 0 & 1 & 0 & 0 & 1 & 0 & 0 & 0 & 0 & 1 & 0 \\
 \{\{1,6\},\{2\},\{3,5\},\{4\}\} & 0 & 0 & 1 & 0 & 0 & 0 & 0 & 0 & 0 & 1 & 0 \\
 \{\{1,2\},\{3,6\},\{4\},\{5\}\} & 0 & 0 & 0 & 0 & 0 & 0 & 0 & 1 & 1 & 0 & 0 \\
 \{\{1,3,6\},\{2\},\{4\},\{5\}\} & 0 & 0 & 1 & 0 & 0 & 0 & 0 & 0 & 1 & 0 & 0 \\
 \{\{1,4\},\{2\},\{3,6\},\{5\}\} & 1 & 0 & 0 & 0 & 1 & 0 & 0 & 0 & 1 & 0 & 0 \\
 \{\{1,5\},\{2\},\{3,6\},\{4\}\} & 0 & 0 & 0 & 1 & 0 & 0 & 0 & 0 & 1 & 0 & 0 \\
 \{\{1,2,3\},\{4\},\{5\},\{6\}\} & 0 & 0 & 0 & 0 & 0 & 0 & 0 & 1 & 0 & 0 & 0 \\
 \{\{1,4\},\{2,3\},\{5\},\{6\}\} & 0 & 0 & 0 & 0 & 1 & 0 & 1 & 0 & 0 & 0 & 0 \\
 \{\{1,5\},\{2,3\},\{4\},\{6\}\} & 0 & 0 & 0 & 1 & 0 & 1 & 0 & 0 & 0 & 0 & 0 \\
 \{\{1,6\},\{2,3\},\{4\},\{5\}\} & 0 & 0 & 1 & 0 & 0 & 0 & 0 & 0 & 0 & 0 & 0 \\
 \{\{1,2,4\},\{3\},\{5\},\{6\}\} & 0 & 0 & 0 & 0 & 0 & 0 & 1 & 0 & 0 & 0 & 1 \\
 \{\{1,3\},\{2,4\},\{5\},\{6\}\} & 0 & 0 & 0 & 0 & 1 & 0 & 0 & 1 & 0 & 0 & 1 \\
 \{\{1,5\},\{2,4\},\{3\},\{6\}\} & 0 & 1 & 0 & 0 & 0 & 1 & 0 & 0 & 0 & 0 & 1 \\
 \{\{1,6\},\{2,4\},\{3\},\{5\}\} & 1 & 0 & 0 & 0 & 0 & 0 & 0 & 0 & 0 & 0 & 1 \\
 \{\{1,2,5\},\{3\},\{4\},\{6\}\} & 0 & 0 & 0 & 0 & 0 & 1 & 0 & 0 & 0 & 1 & 0 \\
 \{\{1,3\},\{2,5\},\{4\},\{6\}\} & 0 & 0 & 0 & 1 & 0 & 0 & 0 & 1 & 0 & 1 & 0 \\
 \{\{1,4\},\{2,5\},\{3\},\{6\}\} & 0 & 1 & 0 & 0 & 0 & 0 & 1 & 0 & 0 & 1 & 0 \\
 \{\{1,6\},\{2,5\},\{3\},\{4\}\} & 0 & 0 & 0 & 0 & 0 & 0 & 0 & 0 & 0 & 1 & 0 \\
 \{\{1,2,6\},\{3\},\{4\},\{5\}\} & 0 & 0 & 0 & 0 & 0 & 0 & 0 & 0 & 1 & 0 & 0 \\
 \{\{1,3\},\{2,6\},\{4\},\{5\}\} & 0 & 0 & 1 & 0 & 0 & 0 & 0 & 1 & 1 & 0 & 0 \\
 \{\{1,4\},\{2,6\},\{3\},\{5\}\} & 1 & 0 & 0 & 0 & 0 & 0 & 1 & 0 & 1 & 0 & 0 \\
 \{\{1,5\},\{2,6\},\{3\},\{4\}\} & 0 & 0 & 0 & 0 & 0 & 1 & 0 & 0 & 1 & 0 & 0 \\
\end{tabular}
$\caption{\label{tab_MainEx}Overview of all 65 partitions and the quadruple set $\sS= \{1,2,3,5\},\{1,2,3,6\},\{1,2,4,5\},\{1,2,4,6\},\{1,2,5,6\},\{1,3,4,6\},\{1,3,5,6\},\{1,4,5,6\},\{2,3,4,5\},$ $\{2,3,4,6\},\{2,3,5,6\}$. Whenever a quadruple of this set covers a partition, the respective entry is marked with a 1, otherwise with a 0. This table shows that $\sS $ is phylogenetically decisive.} 
\end{table}

 \begin{table}\tiny
$ 
\begin{tabular}{cc@{\hspace{-0.001cm}}c@{\hspace{-0.001cm}}c@{\hspace{-0.001cm}}c@{\hspace{-0.001cm}}c@{\hspace{-0.001cm}}c@{\hspace{-0.001cm}}c@{\hspace{-0.001cm}}c@{\hspace{-0.001cm}}c}
 & \{1,2,3,5\} & \{1,2,4,5\} & \{1,2,4,6\} & \{1,3,4,6\} & \{1,3,5,6\} & \{1,4,5,6\}
   & \{2,3,4,5\} & \{2,3,4,6\} & \{2,3,5,6\} \\
 \{\{1\},\{2\},\{3\},\{4,5,6\}\} & 1 & 0 & 0 & 0 & 0 & 0 & 0 & 0 & 0 \\
 \{\{1\},\{2\},\{3,4\},\{5,6\}\} & 1 & 1 & 1 & 0 & 0 & 0 & 0 & 0 & 0 \\
 \{\{1\},\{2\},\{3,5,6\},\{4\}\} & 0 & 1 & 1 & 0 & 0 & 0 & 0 & 0 & 0 \\
 \{\{1\},\{2\},\{3,4,5\},\{6\}\} & 0 & 0 & 1 & 0 & 0 & 0 & 0 & 0 & 0 \\
 \{\{1\},\{2\},\{3,6\},\{4,5\}\} & 1 & 0 & 1 & 0 & 0 & 0 & 0 & 0 & 0 \\
 \{\{1\},\{2\},\{3,4,6\},\{5\}\} & 1 & 1 & 0 & 0 & 0 & 0 & 0 & 0 & 0 \\
 \{\{1\},\{2\},\{3,5\},\{4,6\}\} & 0 & 1 & 0 & 0 & 0 & 0 & 0 & 0 & 0 \\
 \{\{1\},\{2,3\},\{4\},\{5,6\}\} & 0 & 1 & 1 & 1 & 0 & 0 & 0 & 0 & 0 \\
 \{\{1\},\{2,4\},\{3\},\{5,6\}\} & 1 & 0 & 0 & 1 & 0 & 0 & 0 & 0 & 0 \\
 \{\{1\},\{2,5,6\},\{3\},\{4\}\} & 0 & 0 & 0 & 1 & 0 & 0 & 0 & 0 & 0 \\
 \{\{1\},\{2,3\},\{4,5\},\{6\}\} & 0 & 0 & 1 & 1 & 1 & 0 & 0 & 0 & 0 \\
 \{\{1\},\{2,4,5\},\{3\},\{6\}\} & 0 & 0 & 0 & 1 & 1 & 0 & 0 & 0 & 0 \\
 \{\{1\},\{2,6\},\{3\},\{4,5\}\} & 1 & 0 & 0 & 1 & 1 & 0 & 0 & 0 & 0 \\
 \{\{1\},\{2,3\},\{4,6\},\{5\}\} & 0 & 1 & 0 & 0 & 1 & 0 & 0 & 0 & 0 \\
 \{\{1\},\{2,4,6\},\{3\},\{5\}\} & 1 & 0 & 0 & 0 & 1 & 0 & 0 & 0 & 0 \\
 \{\{1\},\{2,5\},\{3\},\{4,6\}\} & 0 & 0 & 0 & 0 & 1 & 0 & 0 & 0 & 0 \\
 \{\{1\},\{2,3,4\},\{5\},\{6\}\} & 0 & 0 & 0 & 0 & 1 & 1 & 0 & 0 & 0 \\
 \{\{1\},\{2,5\},\{3,4\},\{6\}\} & 0 & 0 & 1 & 0 & 1 & 1 & 0 & 0 & 0 \\
 \{\{1\},\{2,6\},\{3,4\},\{5\}\} & 1 & 1 & 0 & 0 & 1 & 1 & 0 & 0 & 0 \\
 \{\{1\},\{2,3,5\},\{4\},\{6\}\} & 0 & 0 & 1 & 1 & 0 & 1 & 0 & 0 & 0 \\
 \{\{1\},\{2,4\},\{3,5\},\{6\}\} & 0 & 0 & 0 & 1 & 0 & 1 & 0 & 0 & 0 \\
 \{\{1\},\{2,6\},\{3,5\},\{4\}\} & 0 & 1 & 0 & 1 & 0 & 1 & 0 & 0 & 0 \\
 \{\{1\},\{2,3,6\},\{4\},\{5\}\} & 0 & 1 & 0 & 0 & 0 & 1 & 0 & 0 & 0 \\
 \{\{1\},\{2,4\},\{3,6\},\{5\}\} & 1 & 0 & 0 & 0 & 0 & 1 & 0 & 0 & 0 \\
 \{\{1\},\{2,5\},\{3,6\},\{4\}\} & 0 & 0 & 1 & 0 & 0 & 1 & 0 & 0 & 0 \\
 \{\{1,2\},\{3\},\{4\},\{5,6\}\} & 0 & 0 & 0 & 1 & 0 & 0 & 1 & 1 & 0 \\
 \{\{1,3\},\{2\},\{4\},\{5,6\}\} & 0 & 1 & 1 & 0 & 0 & 0 & 1 & 1 & 0 \\
 \{\{1,4\},\{2\},\{3\},\{5,6\}\} & 1 & 0 & 0 & 0 & 0 & 0 & 1 & 1 & 0 \\
 \{\{1,5,6\},\{2\},\{3\},\{4\}\} & 0 & 0 & 0 & 0 & 0 & 0 & 1 & 1 & 0 \\
 \{\{1,2\},\{3\},\{4,5\},\{6\}\} & 0 & 0 & 0 & 1 & 1 & 0 & 0 & 1 & 1 \\
 \{\{1,3\},\{2\},\{4,5\},\{6\}\} & 0 & 0 & 1 & 0 & 0 & 0 & 0 & 1 & 1 \\
 \{\{1,4,5\},\{2\},\{3\},\{6\}\} & 0 & 0 & 0 & 0 & 0 & 0 & 0 & 1 & 1 \\
 \{\{1,6\},\{2\},\{3\},\{4,5\}\} & 1 & 0 & 0 & 0 & 0 & 0 & 0 & 1 & 1 \\
 \{\{1,2\},\{3\},\{4,6\},\{5\}\} & 0 & 0 & 0 & 0 & 1 & 0 & 1 & 0 & 1 \\
 \{\{1,3\},\{2\},\{4,6\},\{5\}\} & 0 & 1 & 0 & 0 & 0 & 0 & 1 & 0 & 1 \\
 \{\{1,4,6\},\{2\},\{3\},\{5\}\} & 1 & 0 & 0 & 0 & 0 & 0 & 1 & 0 & 1 \\
 \{\{1,5\},\{2\},\{3\},\{4,6\}\} & 0 & 0 & 0 & 0 & 0 & 0 & 1 & 0 & 1 \\
 \{\{1,2\},\{3,4\},\{5\},\{6\}\} & 0 & 0 & 0 & 0 & 1 & 1 & 0 & 0 & 1 \\
 \{\{1,3,4\},\{2\},\{5\},\{6\}\} & 0 & 0 & 0 & 0 & 0 & 0 & 0 & 0 & 1 \\
 \{\{1,5\},\{2\},\{3,4\},\{6\}\} & 0 & 0 & 1 & 0 & 0 & 0 & 0 & 0 & 1 \\
 \{\{1,6\},\{2\},\{3,4\},\{5\}\} & 1 & 1 & 0 & 0 & 0 & 0 & 0 & 0 & 1 \\
 \{\{1,2\},\{3,5\},\{4\},\{6\}\} & 0 & 0 & 0 & 1 & 0 & 1 & 0 & 1 & 0 \\
 \{\{1,3,5\},\{2\},\{4\},\{6\}\} & 0 & 0 & 1 & 0 & 0 & 0 & 0 & 1 & 0 \\
 \{\{1,4\},\{2\},\{3,5\},\{6\}\} & 0 & 0 & 0 & 0 & 0 & 0 & 0 & 1 & 0 \\
 \{\{1,6\},\{2\},\{3,5\},\{4\}\} & 0 & 1 & 0 & 0 & 0 & 0 & 0 & 1 & 0 \\
 \{\{1,2\},\{3,6\},\{4\},\{5\}\} & 0 & 0 & 0 & 0 & 0 & 1 & 1 & 0 & 0 \\
 \{\{1,3,6\},\{2\},\{4\},\{5\}\} & 0 & 1 & 0 & 0 & 0 & 0 & 1 & 0 & 0 \\
 \{\{1,4\},\{2\},\{3,6\},\{5\}\} & 1 & 0 & 0 & 0 & 0 & 0 & 1 & 0 & 0 \\
 \{\{1,5\},\{2\},\{3,6\},\{4\}\} & 0 & 0 & 1 & 0 & 0 & 0 & 1 & 0 & 0 \\
 \{\{1,2,3\},\{4\},\{5\},\{6\}\} & 0 & 0 & 0 & 0 & 0 & 1 & 0 & 0 & 0 \\
 \{\{1,4\},\{2,3\},\{5\},\{6\}\} & 0 & 0 & 0 & 0 & 1 & 0 & 0 & 0 & 0 \\
 \{\{1,5\},\{2,3\},\{4\},\{6\}\} & 0 & 0 & 1 & 1 & 0 & 0 & 0 & 0 & 0 \\
 \{\{1,6\},\{2,3\},\{4\},\{5\}\} & 0 & 1 & 0 & 0 & 0 & 0 & 0 & 0 & 0 \\
 \{\{1,2,4\},\{3\},\{5\},\{6\}\} & 0 & 0 & 0 & 0 & 1 & 0 & 0 & 0 & 1 \\
 \{\{1,3\},\{2,4\},\{5\},\{6\}\} & 0 & 0 & 0 & 0 & 0 & 1 & 0 & 0 & 1 \\
 \{\{1,5\},\{2,4\},\{3\},\{6\}\} & 0 & 0 & 0 & 1 & 0 & 0 & 0 & 0 & 1 \\
 \{\{1,6\},\{2,4\},\{3\},\{5\}\} & 1 & 0 & 0 & 0 & 0 & 0 & 0 & 0 & 1 \\
 \{\{1,2,5\},\{3\},\{4\},\{6\}\} & 0 & 0 & 0 & 1 & 0 & 0 & 0 & 1 & 0 \\
 \{\{1,3\},\{2,5\},\{4\},\{6\}\} & 0 & 0 & 1 & 0 & 0 & 1 & 0 & 1 & 0 \\
 \{\{1,4\},\{2,5\},\{3\},\{6\}\} & 0 & 0 & 0 & 0 & 1 & 0 & 0 & 1 & 0 \\
 \{\{1,6\},\{2,5\},\{3\},\{4\}\} & 0 & 0 & 0 & 0 & 0 & 0 & 0 & 1 & 0 \\
 \{\{1,2,6\},\{3\},\{4\},\{5\}\} & 0 & 0 & 0 & 0 & 0 & 0 & 1 & 0 & 0 \\
 \{\{1,3\},\{2,6\},\{4\},\{5\}\} & 0 & 1 & 0 & 0 & 0 & 1 & 1 & 0 & 0 \\
 \{\{1,4\},\{2,6\},\{3\},\{5\}\} & 1 & 0 & 0 & 0 & 1 & 0 & 1 & 0 & 0 \\
 \{\{1,5\},\{2,6\},\{3\},\{4\}\} & 0 & 0 & 0 & 1 & 0 & 0 & 1 & 0 & 0 \\
\end{tabular}
$\caption{\label{tab_jannesEx}Overview of all 65 partitions and the quadruple set $\sS= \{1,2,3,5\},\{1,2,4,5\},\{1,2,4,6\},\{1,3,4,6\},\{1,3,5,6\},\{1,4,5,6\},\{2,3,4,5\},\{2,3,4,6\},\{2,3,5,6\}$. Whenever a quadruple of this set covers a partition, the respective entry is marked with a 1, otherwise with a 0. This table shows that $\sS $ is phylogenetically decisive.}
\end{table}

\begin{table}\tiny
$ 
\begin{tabular}{cc@{\hspace{-0.001cm}}c@{\hspace{-0.001cm}}c@{\hspace{-0.001cm}}c@{\hspace{-0.001cm}}c@{\hspace{-0.001cm}}c@{\hspace{-0.001cm}}c@{\hspace{-0.001cm}}c@{\hspace{-0.001cm}}c@{\hspace{-0.001cm}}c}
  & \{1,2,3,4\} & \{1,2,3,5\} & \{1,2,3,6\} & \{1,2,4,5\} & \{1,2,4,6\} & \{1,3,4,5\}
   & \{1,3,5,6\} & \{2,3,4,5\} & \{2,4,5,6\} & \{3,4,5,6\} \\
\{\{1\},\{2\},\{3\},\{4,5,6\}\} & 1 & 1 & 1 & 0 & 0 & 0 & 0 & 0 & 0 & 0 \\
 \{\{1\},\{2\},\{3,4\},\{5,6\}\} & 0 & 1 & 1 & 1 & 1 & 0 & 0 & 0 & 0 & 0 \\
 \{\{1\},\{2\},\{3,5,6\},\{4\}\} & 1 & 0 & 0 & 1 & 1 & 0 & 0 & 0 & 0 & 0 \\
 \{\{1\},\{2\},\{3,4,5\},\{6\}\} & 0 & 0 & 1 & 0 & 1 & 0 & 0 & 0 & 0 & 0 \\
 \{\{1\},\{2\},\{3,6\},\{4,5\}\} & 1 & 1 & 0 & 0 & 1 & 0 & 0 & 0 & 0 & 0 \\
 \{\{1\},\{2\},\{3,4,6\},\{5\}\} & 0 & 1 & 0 & 1 & 0 & 0 & 0 & 0 & 0 & 0 \\
 \{\{1\},\{2\},\{3,5\},\{4,6\}\} & 1 & 0 & 1 & 1 & 0 & 0 & 0 & 0 & 0 & 0 \\
 \{\{1\},\{2,3\},\{4\},\{5,6\}\} & 0 & 0 & 0 & 1 & 1 & 1 & 0 & 0 & 0 & 0 \\
 \{\{1\},\{2,4\},\{3\},\{5,6\}\} & 0 & 1 & 1 & 0 & 0 & 1 & 0 & 0 & 0 & 0 \\
 \{\{1\},\{2,5,6\},\{3\},\{4\}\} & 1 & 0 & 0 & 0 & 0 & 1 & 0 & 0 & 0 & 0 \\
 \{\{1\},\{2,3\},\{4,5\},\{6\}\} & 0 & 0 & 0 & 0 & 1 & 0 & 1 & 0 & 0 & 0 \\
 \{\{1\},\{2,4,5\},\{3\},\{6\}\} & 0 & 0 & 1 & 0 & 0 & 0 & 1 & 0 & 0 & 0 \\
 \{\{1\},\{2,6\},\{3\},\{4,5\}\} & 1 & 1 & 0 & 0 & 0 & 0 & 1 & 0 & 0 & 0 \\
 \{\{1\},\{2,3\},\{4,6\},\{5\}\} & 0 & 0 & 0 & 1 & 0 & 1 & 1 & 0 & 0 & 0 \\
 \{\{1\},\{2,4,6\},\{3\},\{5\}\} & 0 & 1 & 0 & 0 & 0 & 1 & 1 & 0 & 0 & 0 \\
 \{\{1\},\{2,5\},\{3\},\{4,6\}\} & 1 & 0 & 1 & 0 & 0 & 1 & 1 & 0 & 0 & 0 \\
 \{\{1\},\{2,3,4\},\{5\},\{6\}\} & 0 & 0 & 0 & 0 & 0 & 0 & 1 & 0 & 0 & 0 \\
 \{\{1\},\{2,5\},\{3,4\},\{6\}\} & 0 & 0 & 1 & 0 & 1 & 0 & 1 & 0 & 0 & 0 \\
 \{\{1\},\{2,6\},\{3,4\},\{5\}\} & 0 & 1 & 0 & 1 & 0 & 0 & 1 & 0 & 0 & 0 \\
 \{\{1\},\{2,3,5\},\{4\},\{6\}\} & 0 & 0 & 0 & 0 & 1 & 0 & 0 & 0 & 0 & 0 \\
 \{\{1\},\{2,4\},\{3,5\},\{6\}\} & 0 & 0 & 1 & 0 & 0 & 0 & 0 & 0 & 0 & 0 \\
 \{\{1\},\{2,6\},\{3,5\},\{4\}\} & 1 & 0 & 0 & 1 & 0 & 0 & 0 & 0 & 0 & 0 \\
 \{\{1\},\{2,3,6\},\{4\},\{5\}\} & 0 & 0 & 0 & 1 & 0 & 1 & 0 & 0 & 0 & 0 \\
 \{\{1\},\{2,4\},\{3,6\},\{5\}\} & 0 & 1 & 0 & 0 & 0 & 1 & 0 & 0 & 0 & 0 \\
 \{\{1\},\{2,5\},\{3,6\},\{4\}\} & 1 & 0 & 0 & 0 & 1 & 1 & 0 & 0 & 0 & 0 \\
 \{\{1,2\},\{3\},\{4\},\{5,6\}\} & 0 & 0 & 0 & 0 & 0 & 1 & 0 & 1 & 0 & 0 \\
 \{\{1,3\},\{2\},\{4\},\{5,6\}\} & 0 & 0 & 0 & 1 & 1 & 0 & 0 & 1 & 0 & 0 \\
 \{\{1,4\},\{2\},\{3\},\{5,6\}\} & 0 & 1 & 1 & 0 & 0 & 0 & 0 & 1 & 0 & 0 \\
 \{\{1,5,6\},\{2\},\{3\},\{4\}\} & 1 & 0 & 0 & 0 & 0 & 0 & 0 & 1 & 0 & 0 \\
 \{\{1,2\},\{3\},\{4,5\},\{6\}\} & 0 & 0 & 0 & 0 & 0 & 0 & 1 & 0 & 0 & 0 \\
 \{\{1,3\},\{2\},\{4,5\},\{6\}\} & 0 & 0 & 0 & 0 & 1 & 0 & 0 & 0 & 0 & 0 \\
 \{\{1,4,5\},\{2\},\{3\},\{6\}\} & 0 & 0 & 1 & 0 & 0 & 0 & 0 & 0 & 0 & 0 \\
 \{\{1,6\},\{2\},\{3\},\{4,5\}\} & 1 & 1 & 0 & 0 & 0 & 0 & 0 & 0 & 0 & 0 \\
 \{\{1,2\},\{3\},\{4,6\},\{5\}\} & 0 & 0 & 0 & 0 & 0 & 1 & 1 & 1 & 0 & 0 \\
 \{\{1,3\},\{2\},\{4,6\},\{5\}\} & 0 & 0 & 0 & 1 & 0 & 0 & 0 & 1 & 0 & 0 \\
 \{\{1,4,6\},\{2\},\{3\},\{5\}\} & 0 & 1 & 0 & 0 & 0 & 0 & 0 & 1 & 0 & 0 \\
 \{\{1,5\},\{2\},\{3\},\{4,6\}\} & 1 & 0 & 1 & 0 & 0 & 0 & 0 & 1 & 0 & 0 \\
 \{\{1,2\},\{3,4\},\{5\},\{6\}\} & 0 & 0 & 0 & 0 & 0 & 0 & 1 & 0 & 1 & 0 \\
 \{\{1,3,4\},\{2\},\{5\},\{6\}\} & 0 & 0 & 0 & 0 & 0 & 0 & 0 & 0 & 1 & 0 \\
 \{\{1,5\},\{2\},\{3,4\},\{6\}\} & 0 & 0 & 1 & 0 & 1 & 0 & 0 & 0 & 1 & 0 \\
 \{\{1,6\},\{2\},\{3,4\},\{5\}\} & 0 & 1 & 0 & 1 & 0 & 0 & 0 & 0 & 1 & 0 \\
 \{\{1,2\},\{3,5\},\{4\},\{6\}\} & 0 & 0 & 0 & 0 & 0 & 0 & 0 & 0 & 1 & 0 \\
 \{\{1,3,5\},\{2\},\{4\},\{6\}\} & 0 & 0 & 0 & 0 & 1 & 0 & 0 & 0 & 1 & 0 \\
 \{\{1,4\},\{2\},\{3,5\},\{6\}\} & 0 & 0 & 1 & 0 & 0 & 0 & 0 & 0 & 1 & 0 \\
 \{\{1,6\},\{2\},\{3,5\},\{4\}\} & 1 & 0 & 0 & 1 & 0 & 0 & 0 & 0 & 1 & 0 \\
 \{\{1,2\},\{3,6\},\{4\},\{5\}\} & 0 & 0 & 0 & 0 & 0 & 1 & 0 & 1 & 1 & 0 \\
 \{\{1,3,6\},\{2\},\{4\},\{5\}\} & 0 & 0 & 0 & 1 & 0 & 0 & 0 & 1 & 1 & 0 \\
 \{\{1,4\},\{2\},\{3,6\},\{5\}\} & 0 & 1 & 0 & 0 & 0 & 0 & 0 & 1 & 1 & 0 \\
 \{\{1,5\},\{2\},\{3,6\},\{4\}\} & 1 & 0 & 0 & 0 & 1 & 0 & 0 & 1 & 1 & 0 \\
 \{\{1,2,3\},\{4\},\{5\},\{6\}\} & 0 & 0 & 0 & 0 & 0 & 0 & 0 & 0 & 1 & 1 \\
 \{\{1,4\},\{2,3\},\{5\},\{6\}\} & 0 & 0 & 0 & 0 & 0 & 0 & 1 & 0 & 1 & 1 \\
 \{\{1,5\},\{2,3\},\{4\},\{6\}\} & 0 & 0 & 0 & 0 & 1 & 0 & 0 & 0 & 1 & 1 \\
 \{\{1,6\},\{2,3\},\{4\},\{5\}\} & 0 & 0 & 0 & 1 & 0 & 1 & 0 & 0 & 1 & 1 \\
 \{\{1,2,4\},\{3\},\{5\},\{6\}\} & 0 & 0 & 0 & 0 & 0 & 0 & 1 & 0 & 0 & 1 \\
 \{\{1,3\},\{2,4\},\{5\},\{6\}\} & 0 & 0 & 0 & 0 & 0 & 0 & 0 & 0 & 0 & 1 \\
 \{\{1,5\},\{2,4\},\{3\},\{6\}\} & 0 & 0 & 1 & 0 & 0 & 0 & 0 & 0 & 0 & 1 \\
 \{\{1,6\},\{2,4\},\{3\},\{5\}\} & 0 & 1 & 0 & 0 & 0 & 1 & 0 & 0 & 0 & 1 \\
 \{\{1,2,5\},\{3\},\{4\},\{6\}\} & 0 & 0 & 0 & 0 & 0 & 0 & 0 & 0 & 0 & 1 \\
 \{\{1,3\},\{2,5\},\{4\},\{6\}\} & 0 & 0 & 0 & 0 & 1 & 0 & 0 & 0 & 0 & 1 \\
 \{\{1,4\},\{2,5\},\{3\},\{6\}\} & 0 & 0 & 1 & 0 & 0 & 0 & 1 & 0 & 0 & 1 \\
 \{\{1,6\},\{2,5\},\{3\},\{4\}\} & 1 & 0 & 0 & 0 & 0 & 1 & 0 & 0 & 0 & 1 \\
 \{\{1,2,6\},\{3\},\{4\},\{5\}\} & 0 & 0 & 0 & 0 & 0 & 1 & 0 & 1 & 0 & 1 \\
 \{\{1,3\},\{2,6\},\{4\},\{5\}\} & 0 & 0 & 0 & 1 & 0 & 0 & 0 & 1 & 0 & 1 \\
 \{\{1,4\},\{2,6\},\{3\},\{5\}\} & 0 & 1 & 0 & 0 & 0 & 0 & 1 & 1 & 0 & 1 \\
 \{\{1,5\},\{2,6\},\{3\},\{4\}\} & 1 & 0 & 0 & 0 & 0 & 0 & 0 & 1 & 0 & 1 \\
\end{tabular}
$\caption{\label{tab_maxDecNonFTT}Overview of all 65 partitions and the quadruple set $\sS= \{\{1, 2, 3, 4\}, \{1, 2, 3, 5\}, \{1, 2, 3, 6\}, \{1, 2, 4, 5\}, \{1, 2, 4, 
  6\}, \{1, 3, 4, 5\}, \{1, 3, 5, 6\}, \{2, 3, 4, 5\}, \{2, 4, 5, 6\}, \{3, 4, 5, 6\}\}$. Whenever a quadruple of this set covers a partition, the respective entry is marked with a 1, otherwise with a 0. This table shows that $\sS $ is phylogenetically decisive.}
\end{table}

\end{document}